\documentclass{article}

\usepackage{PRIMEarxiv}

\usepackage[utf8]{inputenc} 
\usepackage[T1]{fontenc}    
\usepackage{hyperref}       
\usepackage{url}            
\usepackage{booktabs}       
\usepackage{amsmath}
\usepackage{amsthm}
\usepackage{amssymb}
\usepackage{amsfonts}
\usepackage{mathtools}
\usepackage{amsfonts}       
\usepackage{nicefrac}       
\usepackage{microtype}      
\usepackage{fancyhdr}       
\usepackage{graphicx}       
\usepackage{times}
\usepackage{subfig}
\usepackage{xcolor}
\usepackage{array}
\usepackage{cite}
\usepackage{multirow}
\usepackage{adjustbox}
\usepackage{fontawesome}
\usepackage[font=footnotesize,labelfont=bf]{caption}
\usepackage{tikz-network}
\usepackage{enumitem}

\DeclareMathOperator{\diag}{diag}

\DeclareMathOperator{\rank}{rank}

\newtheorem{theorem}{Theorem}
\newtheorem{lemma}{Lemma}

\newtheorem{corollary}{Corollary}

\newtheorem{assumption}{Assumption}

\title{
Learning-Oriented Fault-Tolerant Control for UAVs with Time-Varying Fault Estimation
}

\author{
  Moh Kamalul Wafi \\
  Department of Electrical and Computer Engineering \\
  Northeastern University \\
  Boston, MA, USA\\
  \texttt{\{wafi.m\}@northeastern.edu} \\
}

\begin{document}
\maketitle

\begin{abstract}
The increasing deployment of unmanned aerial vehicles (UAVs) in safety-critical applications has created a growing demand for control strategies that remain reliable in the presence of actuator and sensor faults. This paper presents a fault-tolerant control framework that combines constructive mathematical modeling, proportional--integral (PI) fault estimation, and learning-oriented control. A nonlinear quadcopter model is first developed using both the Newton--Euler and Euler--Lagrange formulations, from which a discrete-time linear model is derived for observer and controller design. A PI observer is then proposed to estimate actuator and sensor faults, with exponential estimation-error convergence established for constant faults in the absence of disturbances and ultimate boundedness established for slowly time-varying faults under bounded disturbances. The estimated fault signals are incorporated into an active fault-tolerant control architecture together with a Deep Deterministic Policy Gradient (DDPG) controller, which is trained using only the yaw channel and subsequently applied to improve closed-loop tracking performance. Numerical simulations illustrate the observer's estimation performance under constant and slowly time-varying faults, including bounded disturbances, and the tracking performance of the integrated controller under multiple actuator and sensor fault scenarios. The proposed framework combines model-based fault estimation with learning-based control to improve fault accommodation and tracking performance in UAV systems.
\end{abstract}
\allowdisplaybreaks

\section{Introduction}

Unmanned aerial systems (UAS) have become an indispensable technology in both civilian and military applications, including surveillance, inspection, environmental monitoring, search-and-rescue missions, and autonomous transportation \cite{R1,R2}. Despite their growing deployment, UAS often operate in uncertain environments where adverse weather conditions, actuator degradation, sensor malfunctions, and external disturbances may significantly deteriorate the control performance and even compromise flight safety. Consequently, the development of resilient fault-tolerant control (FTC) strategies has become an increasingly important research topic for improving the reliability, robustness, and autonomy of unmanned aerial vehicles (UAVs) \cite{R3}.

A fundamental component of any active FTC architecture is the ability to accurately estimate system faults. Fault detection and diagnosis have therefore attracted considerable attention in recent years, particularly for actuator and sensor faults that may lead to performance degradation, instability, or complete mission failure if left uncompensated \cite{R33,R34,R42}. Reliable fault estimation becomes increasingly challenging due to the highly coupled nonlinear dynamics of UAVs together with their multi-input multi-output control architecture, where numerous sensors and actuators must operate simultaneously to achieve the desired flight objectives.

Accurate mathematical modeling is equally important for the successful implementation of fault estimation and control algorithms \cite{WAFI-JRC,Wafi-Hydraulic-arXiv}. Well-identified UAV dynamics provide the foundation for designing controllers ranging from classical linear techniques to advanced learning-based methods \cite{R5}. Beyond the vehicle dynamics themselves, recent research has also emphasized reliable communication and information exchange, including resilient UAV communication networks \cite{R7,Wafi-ASCC26-DNC}, optimal navigation strategies \cite{R6}, UAV-to-ground communication architectures \cite{Wafi-AIP,Wafi-AIP-arXiv,R48}, and secure communication frameworks for both single- and multi-UAV systems \cite{R8,R9}. These developments further motivate the need for robust control architectures capable of maintaining satisfactory performance in the presence of component failures and environmental uncertainties.

Among existing fault estimation methods, proportional--integral (PI) observers have received considerable attention because of their simple structure, low computational complexity, and capability to simultaneously estimate system states and unknown faults \cite{R44,R45,R46}. Once the faults are accurately reconstructed, they can be compensated within an active FTC framework, thereby recovering the nominal closed-loop behavior despite actuator and sensor degradations \cite{R23,R29}. In parallel, deep reinforcement learning (DRL) has emerged as a promising tool for learning complex control policies directly from system interactions without requiring an accurate analytical model \cite{R47,Wafi-SIAM,Wafi-SIAM-arXiv}. Recent studies have demonstrated the effectiveness of learning-based FTC strategies by integrating DRL with conventional control architectures \cite{R49,R50,R51}. Motivated by these advances, this paper combines a PI observer with a Deep Deterministic Policy Gradient (DDPG) controller. In particular, unlike conventional approaches that require learning for every controlled variable, the proposed framework trains the DDPG agent using only a single output channel while exploiting the learned policy to improve the control performance of multiple system states.

The remainder of this paper is organized as follows. Section~II develops the nonlinear dynamic model of the UAV. Section~III presents the fault modeling and problem formulation. Section~IV develops the PI observer for actuator and sensor fault estimation. Section~V introduces the proposed fault-tolerant control framework. Section~VI describes the learning-oriented DDPG controller. Numerical simulation results are presented in Section~VII, followed by concluding remarks and future research directions in Section~VIII.

\section{Dynamics of Quadcopter}
\label{sec:2}

\subsection{Mathematical Model of Quadcopter}

The dynamics of the quadcopter are illustrated in Fig.~\ref{F1}, where the vehicle is described with respect to two reference frames: the earth-inertial frame ($E$) and the body-fixed frame ($B$). The figure also depicts the forces and torques generated by the four rotors, indexed by $i=1,2,3,4$. The inertial frame is fixed to the earth, with gravity acting along the negative $z$-axis, whereas the body frame is rigidly attached to the quadcopter. Each arm is equipped with a propeller driven by a motor that produces a thrust through its angular velocity $\omega_i$ and the corresponding motor torque $\tau_{m_i}$. Rotors $1$ and $3$ rotate counterclockwise, while rotors $2$ and $4$ rotate clockwise to balance the reaction torques.

The position of the quadcopter's center of mass in the inertial frame is denoted by $\zeta=[x,y,z]^\top$, whereas its orientation is described by the Euler angles $\eta=[\phi,\theta,\psi]^\top$, representing the roll $(\phi)$, pitch $(\theta)$, and yaw $(\psi)$ angles about the $x$-, $y$-, and $z$-axes, respectively. Accordingly,
\begin{figure}[t!]
    \centering
    \includegraphics[width=.85\linewidth]{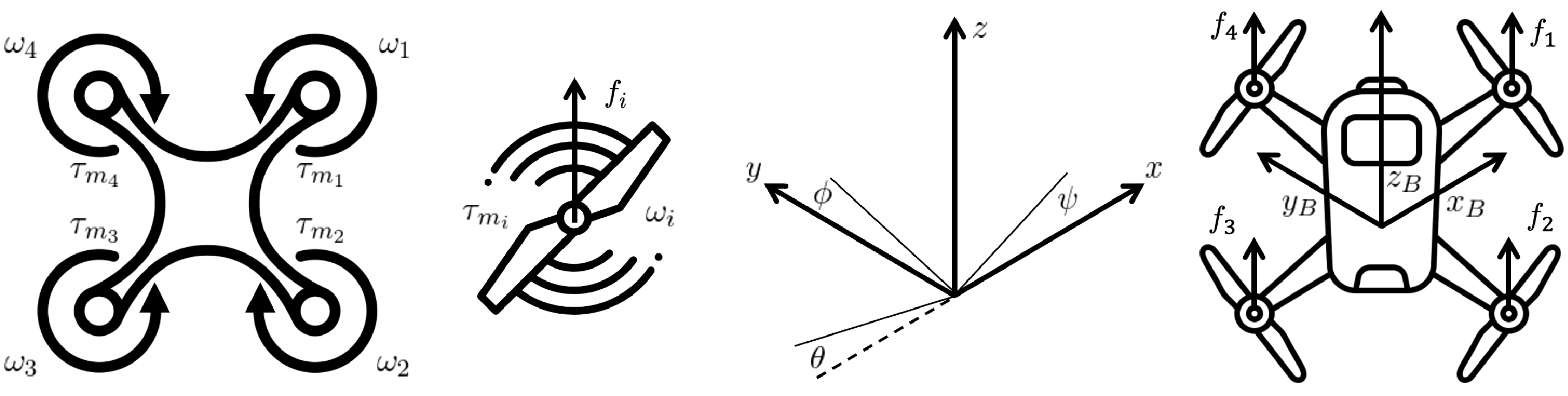}
    \caption{The figure shows the structural frames of a quadcopter as viewed from both the inertial and body perspectives.}
    \label{F1}
\end{figure}
\begin{subequations}
\begin{align}
    \zeta &= \left[\begin{array}{c}
        x \\
        y \\
        z
    \end{array}\right], \quad \eta = \left[\begin{array}{c}
        \phi \\
        \theta \\
        \psi
    \end{array}\right], \quad \mu_p = \left[\begin{array}{c}
        \zeta \\
        \eta
    \end{array}\right], \label{eq:position_states}
\end{align}
and
\begin{align}
    V_B &= \left[\begin{array}{c}
        u \\
        v \\
        w
    \end{array}\right], \quad \vartheta = \left[\begin{array}{c}
        p \\
        q \\
        r
    \end{array}\right], \quad \mu_v = \left[\begin{array}{c}
        V_B \\
        \vartheta
    \end{array}\right]. \label{eq:velocity_states}
\end{align}
\end{subequations}
The vectors $\mu_p$ and $\mu_v$ collect the translational and rotational position and velocity variables, respectively. Specifically, $V_B=[u,v,w]^\top$ denotes the linear velocity expressed in the body-fixed frame, whereas $\vartheta=[p,q,r]^\top$ represents the corresponding angular velocity. The body-fixed frame is attached to the quadcopter's center of mass, and its orientation with respect to the inertial frame is described by the rotation matrix $R$.

For notational convenience, we use the shorthand $\mathbf{c}_x=\cos(x)$, $\mathbf{s}_x=\sin(x)$, and $\mathbf{t}_x=\tan(x)$. The elementary rotation matrices about the $x$-, $y$-, and $z$-axes are denoted by $R_x(\phi)$, $R_y(\theta)$, and $R_z(\psi)$, respectively, and are given by
\begin{subequations}
\begin{align}
    \left[\begin{array}{c|c|c}
        R_x(\phi) & R_y(\theta) & R_z(\psi)
    \end{array}\right]
    =
    \left[\begin{array}{ccc|ccc|ccc}
        1 & 0 & 0 & \textbf{c}_\theta & 0 & \textbf{s}_\theta & \textbf{c}_\psi & -\textbf{s}_\psi & 0\\
        0 & \textbf{c}_\phi & -\textbf{s}_\phi & 0 & 1 & 0 & \textbf{s}_\psi & \textbf{c}_\psi & 0\\
        0 & \textbf{s}_\phi & \textbf{c}_\phi & -\textbf{s}_\theta & 0 & \textbf{c}_\theta & 0 & 0 & 1
    \end{array}\right].
    \label{eq:elementary_rotations}
\end{align}

The rotation from the body-fixed frame to the inertial frame is obtained using the standard $ZYX$ Euler-angle convention,
\begin{align}
    R_{B\to E}
    \coloneqq
    R_z(\psi)R_y(\theta)R_x(\phi),
    \label{eq:body_to_inertial}
\end{align}
which yields the complete rotation matrix
\begin{gather}
    R =
    \left[\begin{array}{ccc}
        \textbf{c}_\psi \textbf{c}_\theta &
        \textbf{c}_\psi \textbf{s}_\theta \textbf{s}_\phi - \textbf{s}_\psi \textbf{c}_\phi &
        \textbf{c}_\psi \textbf{s}_\theta \textbf{c}_\phi + \textbf{s}_\psi \textbf{s}_\phi\\
        \textbf{s}_\psi \textbf{c}_\theta &
        \textbf{s}_\psi \textbf{s}_\theta \textbf{s}_\phi + \textbf{c}_\psi \textbf{c}_\phi &
        \textbf{s}_\psi \textbf{s}_\theta \textbf{c}_\phi - \textbf{c}_\psi \textbf{s}_\phi\\
        -\textbf{s}_\theta &
        \textbf{c}_\theta \textbf{s}_\phi &
        \textbf{c}_\theta \textbf{c}_\phi
    \end{array}\right].
    \label{eq:rotation_matrix}
\end{gather}
\end{subequations}
Since $R$ is the product of orthogonal rotation matrices, it is itself orthogonal. Consequently, $R^{-1}=R^\top.$

The four propeller--motor assemblies are assumed to be identical. Consequently, it is sufficient to derive the thrust and torque generated by a single rotor before extending the results to the complete quadcopter.
Each rotor generates an individual thrust force $f_i$, and the total thrust $T$ is obtained by summing the contributions from all four rotors,
\begin{align}
    f_i = \left(C_d\rho Ar^2\right)\omega_i^2,
    \label{eq:thrust}
\end{align}
where $C_d$ is the thrust coefficient, $\rho$ is the air density, $A$ is the propeller cross-sectional area, and $r$ is the rotor radius. The constant $k\coloneqq C_d\rho Ar^2$
is introduced as the lift coefficient, allowing the total thrust to be written compactly as
\[
    T = \sum_{i=1}^4 f_i =  k\sum_{i=1}^4\omega_i^2.
\]
The corresponding motor torque is modeled as
\begin{align}
    \tau_{m_i} = b\omega_i^2 + I_m\dot{\omega}_i,
    \label{eq:motor_torque}
\end{align}
where $b$ denotes the aerodynamic drag coefficient and $I_m$ is the rotor inertia. Since the rotor dynamics are considerably faster than the rigid-body dynamics, the angular acceleration is neglected, i.e.,
\[
    \dot{\omega}_i\approx0.
\]
When all four rotors rotate at the same angular speed, the resulting thrust acts solely along the body-fixed $z$-axis, enabling the quadcopter to ascend, descend, or hover according to its magnitude relative to gravity. By contrast, differences in the rotor thrusts generate the control moments associated with the roll $(\phi)$, pitch $(\theta)$, and yaw $(\psi)$ motions. Hence, the net thrust vector $T_B$ and the control torque vector $\tau_B$ take the form
\begin{align}
    T_B = \left[\begin{array}{c}
        0 \\
        0 \\
        T
    \end{array}\right], \quad \tau_B = \left[\begin{array}{c}
        \tau_\phi \\
        \tau_\theta \\
        \tau_\psi
    \end{array}\right] = \left[\begin{array}{c}
        L(f_4 - f_2) \\
        L(f_3 - f_1) \\
        \tau_{m_1}-\tau_{m_2}+\tau_{m_3}-\tau_{m_4}
    \end{array}\right],
    \label{eq:thrust_and_torque}
\end{align}
where $L$ denotes the distance from each rotor to the quadcopter's center of mass. The roll torque $\tau_\phi$ is generated by the thrust difference between rotors $4$ and $2$, written as $L(f_4-f_2)$, whereas the pitch torque $\tau_\theta$ is generated by the thrust difference between rotors $3$ and $1$, given by $L(f_3-f_1)$. These rotor pairs are located on opposite sides of the vehicle\footnote{Specifically, rotors $(2,4)$ form one pair and rotors $(1,3)$ form the other pair, as illustrated in Fig.~\ref{F1}.}. The yaw torque $\tau_\psi$ is produced by the imbalance of the reaction torques generated by the counter-rotating propellers. Throughout this work, the quadcopter is assumed to be geometrically symmetric, implying
\[
    I_{xx}=I_{yy},
\]
with inertia matrix
\[
    I=\diag\{I_{xx},I_{yy},I_{zz}\}.
\]
Under the rigid-body assumption, the translational dynamics are obtained by applying Newton's second law in the body-fixed frame. The inertial force associated with the mass acceleration $m\dot{V}_B$ and the centrifugal force $\vartheta\times(mV_B)$ are balanced by the gravitational force $R^\top\varrho$ and the net thrust $T_B$, yielding
\begin{align}
    m\dot{V}_B + \vartheta \times \left(mV_B\right) = R^\top\varrho + T_B,
    \label{eq:body_translational_dynamics}
\end{align}
Using the kinematic relation $\dot{\zeta}=RV_B$, the translational dynamics can be expressed in the inertial frame. The rotational term $\vartheta\times(mV_B)$ combines with the derivative of the rotation matrix to recover the inertial-frame acceleration,
\begin{align*}
    m\ddot{\zeta}=m\dot{R}V_B+mR\dot{V}_B.
\end{align*}
Using the identity $\dot{R}V_B=R(\vartheta\times V_B),$ it follows that
\[
    m\ddot{\zeta}=R\left[m\dot{V}_B+\vartheta\times(mV_B)\right].
\]
Substituting \eqref{eq:body_translational_dynamics} together with the orthogonality property $RR^\top=I$ gives
\[
    m\ddot{\zeta}=R(R^\top\varrho+T_B)=\varrho+RT_B,
\]
or equivalently,
\begin{align}\begin{aligned}
    \left[\begin{array}{c}
        \ddot{x} \\
        \ddot{y} \\
        \ddot{z}
    \end{array}\right] &= -g\left[\begin{array}{c}
        0 \\
        0 \\
        1
    \end{array}\right] + \frac{T}{m}\left[\begin{array}{c}
        \textbf{c}_\psi \textbf{s}_\theta \textbf{c}_\phi + \textbf{s}_\psi \textbf{s}_\phi \\
        \textbf{s}_\psi \textbf{s}_\theta \textbf{c}_\phi - \textbf{c}_\psi \textbf{s}_\phi \\
        \textbf{c}_\theta \textbf{c}_\phi
    \end{array}\right].
\end{aligned}
\label{eq:inertial_translational_dynamics}
\end{align}
The translational dynamics describe the evolution of the quadcopter position $\zeta$, whereas the attitude dynamics governing the Euler angles $\eta$ are obtained from the rotational motion of the rigid body.

Similarly, the rotational dynamics are obtained by applying Euler's equation to the rigid body. The angular acceleration term $I\dot{\vartheta}$, together with the centripetal term $\vartheta\times(I\vartheta)$ and the gyroscopic torque $\Gamma$, balances the control torque $\tau$, yielding
\begin{align}
    I\dot{\vartheta} + \vartheta \times (I\vartheta) + \Gamma = \tau.
    \label{eq:body_rotational_dynamics}
\end{align}
Expanding \eqref{eq:body_rotational_dynamics} gives the angular acceleration in terms of the body angular velocity $\vartheta=[p,q,r]^\top$, the control torque $\tau=[\tau_\phi,\tau_\theta,\tau_\psi]^\top$, and the gyroscopic effect,
\begin{align}
    \dot{\vartheta} = I^{-1}\left(-\vartheta\times\begin{bmatrix}
        I_{xx}p \\
        I_{yy}q \\
        I_{zz}r
    \end{bmatrix} - I_r \vartheta\times\begin{bmatrix}
        0 \\
        0 \\
        1
    \end{bmatrix}\Omega_\Gamma + \tau\right).
    \label{eq:angular_acceleration}
\end{align}
Using the symmetry assumption $I_{xx}=I_{yy}$ together with the rotor angular momentum
\[
    \Omega_\Gamma=\omega_1-\omega_2+\omega_3-\omega_4,
\]
the body angular accelerations about the roll $(\phi)$, pitch $(\theta)$, and yaw $(\psi)$ axes become
\begin{align}\begin{aligned}
    \dot{p} &= \frac{(I_{yy} - I_{zz})qr}{I_{xx}} - I_r \frac{q}{I_{xx}}\Omega_\Gamma + \frac{\tau_\phi}{I_{xx}},\\
    \dot{q} &= \frac{(I_{zz} - I_{xx})pr}{I_{yy}} + I_r \frac{p}{I_{yy}}\Omega_\Gamma + \frac{\tau_\theta}{I_{yy}},\\
    \dot{r} &= \frac{(I_{xx} - I_{yy})pq}{I_{zz}} + \frac{\tau_\psi}{I_{zz}}.
\end{aligned}
\label{eq:body_rate_dynamics}
\end{align}

The body angular velocity $\vartheta=[p,q,r]^\top$ is related to the Euler-angle rate $\dot{\eta}=[\dot{\phi},\dot{\theta},\dot{\psi}]^\top$ through the kinematic transformation matrices $W_\eta^{-1}$ and $W_\eta$.\footnote{The Euler-angle rates $\dot{\eta}$ are
coordinate derivatives and are generally different from the body angular velocity $\vartheta$.
The kinematic transformation relates these quantities.} These transformations relate the body angular velocity
to the Euler-angle rates according to
\begin{align}\begin{aligned}
    \dot{\eta} = W_\eta^{-1}\vartheta, &\begin{bmatrix}
        \dot{\phi} \\
        \dot{\theta} \\
        \dot{\psi}
    \end{bmatrix} &= \begin{bmatrix}
        1 & \textbf{s}_\phi \textbf{t}_\theta & \textbf{c}_\phi \textbf{t}_\theta \\
        0 & \textbf{c}_\phi & -\textbf{s}_\phi \\
        0 & \textbf{s}_\phi/\textbf{c}_\theta & \textbf{c}_\phi/\textbf{c}_\theta
    \end{bmatrix}\begin{bmatrix}
        p \\
        q \\
        r
    \end{bmatrix},\qquad
    \vartheta = W_\eta\dot{\eta}, &\begin{bmatrix}
        p \\
        q \\
        r
    \end{bmatrix} &= \begin{bmatrix}
        1 & 0 & -\textbf{s}_\theta \\
        0 & \textbf{c}_\phi & \textbf{s}_\phi \textbf{c}_\theta \\
        0 & -\textbf{s}_\phi & \textbf{c}_\phi \textbf{c}_\theta
    \end{bmatrix}\begin{bmatrix}
        \dot{\phi} \\
        \dot{\theta} \\
        \dot{\psi}
    \end{bmatrix}.
\end{aligned}
\label{eq:euler_kinematics}
\end{align}

Differentiating the kinematic relation $\dot{\eta}=W_\eta^{-1}\vartheta$ with respect to time gives
\begin{align}
    \ddot{\eta}
    =
    \frac{d}{dt}\left(W_\eta^{-1}\vartheta\right)
    =
    \frac{d}{dt}\left(W_\eta^{-1}\right)\vartheta
    +
    W_\eta^{-1}\dot{\vartheta},
    \label{eq:euler_acceleration}
\end{align}
where the time derivative of the transformation matrix $W_\eta^{-1}$ is given by
\begin{align*}
    \dot{W}_\eta^{-1}
    =
    \left[\begin{array}{ccc}
        0 &
        \dot{\phi}\textbf{c}_\phi \textbf{t}_\theta + \dot{\theta}\textbf{s}_\phi/\textbf{c}_\theta^2 &
        -\dot{\phi}\textbf{s}_\phi \textbf{t}_\theta + \dot{\theta}\textbf{c}_\phi/\textbf{c}_\theta^2 \\
        0 &
        -\dot{\phi}\textbf{s}_\phi &
        -\dot{\phi}\textbf{c}_\phi \\
        0 &
        \dot{\phi}\textbf{c}_\phi/\textbf{c}_\theta + \dot{\theta}\textbf{s}_\phi \textbf{t}_\theta/\textbf{c}_\theta &
        -\dot{\phi}\textbf{s}_\phi/\textbf{c}_\theta + \dot{\theta}\textbf{c}_\phi \textbf{t}_\theta/\textbf{c}_\theta
    \end{array}\right].
\end{align*}

\subsection{Alternative Lagrangian Derivation of the Equations of Motion}

For completeness, the equations of motion can also be derived using the Euler--Lagrange formulation. Unlike the Newton--Euler approach presented previously, the Lagrangian framework is based on the kinetic and potential energies of the system. 
Let $\mu=\mu_p=[\zeta^\top,\eta^\top]^\top$ denote the generalized coordinates.
The Lagrangian function is defined as
\begin{align}
    \mathcal{L}(\mu,\dot{\mu})
    =
    E_T+E_R-E_P
    =
    \frac{m}{2}\dot{\zeta}^{\top}\dot{\zeta}
    +
    \frac{1}{2}\vartheta^{\top}I\vartheta
    -
    mgz,
    \label{eq:lagrangian}
\end{align}
where the first, second, and third terms represent the translational kinetic energy, rotational kinetic energy, and gravitational potential energy, respectively.

The Euler--Lagrange equations in the presence of
generalized forces $f$ and torques $\tau_\eta$ are given by
\begin{align}
    \begin{bmatrix}
        f\\
        \tau_\eta
    \end{bmatrix}
    =
    \frac{d}{dt}
    \left(
    \frac{\partial\mathcal{L}}
         {\partial\dot{\mu}}
    \right)
    -
    \frac{\partial\mathcal{L}}
         {\partial\mu}.
    \label{eq:euler_lagrange}
\end{align}

Since the translational and rotational coordinates are independent, the corresponding equations can be derived separately. For the translational motion, the generalized force is the net thrust generated by the four rotors,
\begin{align}
    f
    =
    RT_B
    =
    m\ddot{\zeta}
    +
    mg
    \begin{bmatrix}
        0&0&1
    \end{bmatrix}^{\!\top},
    \label{eq:lagrangian_translation}
\end{align}
which is equivalent to the translational dynamics obtained in \eqref{eq:inertial_translational_dynamics}.

To derive the rotational dynamics, the body angular velocity $\vartheta$ is related to the Euler-angle rate $\dot{\eta}$ through the transformation matrix $W_\eta$. Consequently, the generalized inertia matrix with respect to the Euler coordinates is
\begin{align}
    J(\eta)
    \coloneqq
    W_\eta^{\top}IW_\eta
    =
    \begin{bmatrix}
        I_{xx}
        &
        0
        &
        -I_{xx}\mathbf{s}_{\theta}
        \\
        0
        &
        I_{yy}\mathbf{c}_{\phi}^{2}
        +
        I_{zz}\mathbf{s}_{\phi}^{2}
        &
        \xi_{1}
        \\
        -I_{xx}\mathbf{s}_{\theta}
        &
        \xi_{1}
        &
        \xi_{2}
    \end{bmatrix},
    \label{eq:generalized_inertia}
\end{align}
where $\xi_1=(I_{yy}-I_{zz})\mathbf{c}_{\phi}\mathbf{s}_{\phi}\mathbf{c}_{\theta}$ and
$\xi_2=I_{xx}\mathbf{s}_{\theta}^{2}+I_{yy}\mathbf{s}_{\phi}^{2}\mathbf{c}_{\theta}^{2}
+I_{zz}\mathbf{c}_{\phi}^{2}\mathbf{c}_{\theta}^{2}.$

The rotational kinetic energy can therefore be expressed in the Euler coordinates as
\begin{align}
    E_R
    =
    \frac12
    \vartheta^\top I\vartheta
    =
    \frac12
    \dot{\eta}^{\top}
    J(\eta)
    \dot{\eta},
    \label{eq:rotational_energy}
\end{align}
where we have used the relation $\vartheta=W_\eta\dot{\eta}$.

Applying the Euler--Lagrange equation to the rotational coordinates yields
\begin{subequations}
\begin{align}
    \tau_\eta
    =
    W_\eta^\top(\tau_B-\Gamma)
    =
    J(\eta)\ddot{\eta}
    +
    \dot{J}(\eta)\dot{\eta}
    -
    \frac12
    \frac{\partial}
         {\partial\eta}
    \left(
    \dot{\eta}^{\top}
    J(\eta)
    \dot{\eta}
    \right),
    \label{eq:lagrangian_rotational_dynamics}
\end{align}
where the last two terms are commonly grouped as
$C(\eta,\dot{\eta})\dot{\eta}$, representing the
Coriolis and centrifugal effects.
Consequently,
\begin{align*}
    W_\eta^\top(\tau_B-\Gamma)
    =
    J(\eta)\ddot{\eta}
    +
    C(\eta,\dot{\eta})\dot{\eta},
\end{align*}
and the angular acceleration is obtained as
\begin{align}
    \ddot{\eta}
    =
    J^{-1}(\eta)
    \left[
        W_\eta^\top(\tau_B-\Gamma)
        -
        C(\eta,\dot{\eta})\dot{\eta}
    \right],
    \label{eq:lagrangian_angular_acceleration}
\end{align}
\end{subequations}
which is equivalent to the rotational dynamics derived previously.

To account for aerodynamic effects encountered during flight, the translational dynamics are augmented with a drag force. The body-frame and inertial-frame equations become
\begin{subequations}
\begin{gather}
    m\dot{V}_B+\vartheta\times(mV_B)
    =
    R^\top\varrho
    +
    T_B
    -
    R^\top T_D,
    \label{eq:body_translational_dynamics_drag}
    \\
    m\ddot{\zeta}
    =
    \varrho
    +
    RT_B
    -
    T_D,
    \label{eq:inertial_translational_dynamics_drag}
\end{gather}
\end{subequations}
where the drag force is modeled as
\begin{align}
    T_D
    =
    A\dot{\zeta}
    =
    \begin{bmatrix}
    A_x&0&0\\
    0&A_y&0\\
    0&0&A_z
    \end{bmatrix}
    \begin{bmatrix}
    \dot{x}\\
    \dot{y}\\
    \dot{z}
    \end{bmatrix},
    \label{eq:drag_force}
\end{align}
with $A_x$, $A_y$, and $A_z$ denoting the drag coefficients along the inertial $x$-, $y$-, and $z$-axes, respectively.

Finally, the individual rotor angular velocities can be recovered from the total thrust $T$ and the control torques $\tau_\phi$, $\tau_\theta$, and $\tau_\psi$ using \eqref{eq:motor_torque} and \eqref{eq:thrust_and_torque},
\begin{align}
    \begin{aligned}
    \omega_1^2
    &=
    \frac{T}{4k}
    -
    \frac{\tau_\theta}{2kL}
    +
    \frac{\tau_\psi}{4b},
    \qquad
    \omega_2^2
    =
    \frac{T}{4k}
    -
    \frac{\tau_\phi}{2kL}
    -
    \frac{\tau_\psi}{4b},
    \\
    \omega_3^2
    &=
    \frac{T}{4k}
    +
    \frac{\tau_\theta}{2kL}
    +
    \frac{\tau_\psi}{4b},
    \qquad
    \omega_4^2
    =
    \frac{T}{4k}
    +
    \frac{\tau_\phi}{2kL}
    -
    \frac{\tau_\psi}{4b}.
    \end{aligned}
    \label{eq:rotor_speeds}
\end{align}

The resulting nonlinear model forms the basis for the subsequent fault estimation and fault-tolerant learning-oriented control framework. In the next subsection, the nonlinear dynamics are linearized and augmented to incorporate actuator and sensor fault states together with external disturbances.

\subsection{Linearized System and Control Inputs}

The complete quadcopter model consists of twelve states, corresponding to the translational and rotational positions together with their associated velocities. Among these, the position and attitude variables
\[
    [x,y,z,\phi,\theta,\psi]
\]
represent the primary quantities to be regulated and measured. As commonly adopted in quadcopter control, the translational motions $(x,y)$ are governed by an outer-loop position controller, whereas the altitude and attitude variables $(z,\phi,\theta,\psi)$ are regulated by an inner-loop controller.

Since this work focuses on attitude stabilization and fault-tolerant control, only the inner-loop dynamics are considered. Accordingly, the state vector is defined as
\[
    x=[z,\phi,\theta,\psi,w,p,q,r]^\top \in\mathbb{R}^{n},
\]
where $w$ denotes the body-frame vertical velocity,
and $p$, $q$, and $r$ are the body angular velocities.
The measured output consists of the altitude and Euler angles,
\[
    y=[z,\phi,\theta,\psi]^\top.
\]

The nonlinear dynamics
\[
    \dot{x}(t)=f(x,t)+g(x,t)u
\]
are linearized about the hover equilibrium
\[
    x^\star=[z^\star,0,\ldots,0]^\top,
\]
with the corresponding equilibrium input $u^\star=[mg,0,0,0]^\top.$
The linearization is expressed in terms of the
deviation variables $\delta x=x-x^\star$ and
$\delta u=u-u^\star$, and subsequently discretized
to obtain the state-space model used for state
estimation and controller design.

The control input is composed of the collective thrust and the three body torques,
\begin{align*}
    u=
    \begin{bmatrix}
    T\\
    \tau_\phi\\
    \tau_\theta\\
    \tau_\psi
    \end{bmatrix},
\end{align*}
where $T$ regulates the altitude, while $\tau_\phi$, $\tau_\theta$, and $\tau_\psi$ control the roll $(\phi)$, pitch $(\theta)$, and yaw $(\psi)$ motions, respectively.

To exploit the advantages of both classical and learning-based control, a hybrid control architecture is adopted. Specifically, a proportional-integral-derivative (PID) controller provides decoupled feedback for the altitude, roll, and pitch channels, whereas the reinforcement learning policy
\[
    u_r\coloneqq\mu_\theta(x)
\]
is employed to regulate the yaw motion. The overall control input is therefore generated according to
\begin{align}
    u
    =
    \begin{bmatrix}
    T\\
    \tau_\phi\\
    \tau_\theta\\
    \tau_\psi
    \end{bmatrix}
    =
    \begin{bmatrix}
    \dfrac{m}{\mathbf{c}_\phi\mathbf{c}_\theta} & 0 & 0 & 0\\
    0 & I_{xx} & 0 & 0\\
    0 & 0 & I_{yy} & 0\\
    0 & 0 & 0 & I_{zz}
    \end{bmatrix}
    \begin{bmatrix}
    u_z\\
    u_\phi\\
    u_\theta\\
    u_r
    \end{bmatrix},
    \label{E27}
\end{align}
where $u_z$, $u_\phi$, and $u_\theta$ denote
the PID control actions for altitude, roll, and
pitch, respectively, while $u_r$ is the
reinforcement learning control input for yaw.
The altitude command includes gravity compensation
($u_z=g$ at hover), while the attitude commands
represent nominal angular accelerations near hover.

\section{Fault Modeling and Problem Formulation}

In practical unmanned aerial systems, actuator and sensor faults may arise due to component degradation, environmental disturbances, mechanical wear, or sensor drift. These faults can significantly deteriorate the control performance and may even compromise the stability of the closed-loop system if left uncompensated. Therefore, an accurate fault estimation mechanism is essential for achieving reliable fault-tolerant control.

This section introduces the fault models considered throughout this work and formulates the corresponding fault estimation problem. Unlike the conventional constant-fault assumption \cite{Wafi-CoDIT}, the proposed framework explicitly accounts for slowly time-varying faults, leading to a more realistic representation of practical operating conditions.

\subsection{Actuator and Sensor Fault Modeling}

Consider the discrete-time linearized system
\begin{align}
    \begin{aligned}
    x_{k+1}
    &=
    Ax_k+Bu_k+E_af_{a,k}+d_k,
    \label{eq:faulty_system}\\
    y_k
    &=
    Cx_k+E_sf_{s,k}.
    \end{aligned}
\end{align}
where $x_k\in\mathbb{R}^{n}$ denotes the system state, $u_k\in\mathbb{R}^{m}$ is the control input, and $y_k\in\mathbb{R}^{p}$ is the measured output. The vectors $f_{a,k}\in\mathbb{R}^{n_a}$ and $f_{s,k}\in\mathbb{R}^{n_s}$ are the actuator and sensor faults, respectively, while $d_k$ is an unknown disturbance. The matrices $E_a\in\mathbb{R}^{n\times n_a}$ and $E_s\in\mathbb{R}^{p\times n_s}$ specify the fault distribution in the actuator and sensor
channels.

The actuator fault $f_{a,k}$ enters the state
dynamics through $E_a$, whereas the sensor fault
$f_{s,k}$ directly corrupts the measured output.
Consequently, the estimation problem consists of reconstructing both fault signals from the available measurements while maintaining satisfactory closed-loop performance.

\subsection{Time-Varying Fault Model}

Most fault estimation methods assume that the actuator and sensor faults remain constant over the estimation interval. Such an assumption is appropriate for abrupt failures but may not accurately describe gradual degradation processes encountered in practical aerial vehicles. 
Let $f_k=[f_{a,k}^\top,f_{s,k}^\top]^\top$
and $\Delta_k=[\Delta_{a,k}^\top,
\Delta_{s,k}^\top]^\top$.
To capture these phenomena, the fault dynamics are modeled as
\begin{align}
    f_{k+1}
    =
    f_k+\Delta_k,
    \label{eq:time_varying_fault}
\end{align}
where $\Delta_k$ represents the fault variation between two consecutive sampling instants.

\begin{assumption}
\label{ass:slowly_varying_fault}
The actuator and sensor faults are assumed to vary slowly with time. Specifically,
\begin{align}\label{eq:slowly_varying_faults}
    \begin{aligned}
    f_{a,k+1}
    &=
    f_{a,k}
    +
    \Delta_{a,k},\\
    f_{s,k+1}
    &=
    f_{s,k}
    +
    \Delta_{s,k},
    \end{aligned}
\end{align}
where the fault variations satisfy $\|\Delta_{a,k}\|\le\delta_a$ and
$\|\Delta_{s,k}\|\le\delta_s$, for some positive constants $\delta_a,\delta_s>0$.
\end{assumption}

Assumption~\ref{ass:slowly_varying_fault} states that the actuator and sensor faults evolve sufficiently slowly compared with the system dynamics. 
The constant-fault model is recovered when $\delta_a=\delta_s=0$.

\subsection{Problem Formulation}

The objective of this work is to estimate the actuator and sensor faults while preserving the tracking performance of the closed-loop system. Specifically, given the faulty system \eqref{eq:faulty_system} satisfying Assumption~\ref{ass:slowly_varying_fault}, the goal is to design

\begin{enumerate}
    \item a proportional--integral observer capable of estimating slowly time-varying actuator and sensor faults;
    \item a fault compensation strategy based on the estimated fault signals; and
    \item a learning-oriented controller that maintains satisfactory tracking performance in the presence of bounded time-varying faults.
\end{enumerate}

The observer design and its theoretical properties are developed in the following section.

\section{PI Observer for Time-Varying Fault Estimation}
\label{sec:PI_observer}

This section develops a proportional--integral (PI) observer for estimating actuator and sensor faults. In contrast to the constant-fault setting, the analysis explicitly accounts for slowly time-varying faults. The main result shows that the estimation error converges to zero for constant faults and remains uniformly bounded when the fault variation is bounded.

\subsection{PI Observer for Actuator Faults}

We first consider the actuator-fault model
\begin{align}
    \begin{aligned}
    x_{k+1}&=Ax_k+Bu_k+E_af_{a,k},\\
    y_k&=Cx_k,
    \end{aligned}
    \label{eq:actuator_fault_system}
\end{align}
where $x_k\in\mathbb{R}^n$ is the state, $u_k\in\mathbb{R}^m$ is the control input, $y_k\in\mathbb{R}^p$ is the measured output, and $f_{a,k}\in\mathbb{R}^r$ denotes the actuator fault. The matrix $E_a\in\mathbb{R}^{n\times r}$ specifies the actuator fault distribution. 

For clarity of the observer derivation, we temporarily omit the unknown disturbance term $d_k$ and consider actuator faults only. 
Bounded disturbances can be incorporated as additional
inputs to the estimation error dynamics, although
they generally introduce a nonzero estimation-error
bound.

Under Assumption~\ref{ass:slowly_varying_fault}, the actuator fault satisfies
\begin{align*}
    f_{a,k+1}
    =
    f_{a,k}
    +
    \Delta_{a,k},
\end{align*}
where $\Delta_{a,k}\in\mathbb{R}^r$ denotes the actuator fault variation between two consecutive sampling instants. The constant-fault case is recovered by setting $\Delta_{a,k}=0$.

Define the augmented state
\begin{align*}
    \bar{x}_k
    =
    \begin{bmatrix}
        x_k\\
        f_{a,k}
    \end{bmatrix}.
\end{align*}
Then, the augmented system is
\begin{align}
    \bar{x}_{k+1}
    =
    \bar{A}\bar{x}_k
    +
    \bar{B}u_k
    +
    \bar{I}\Delta_{a,k},
    \qquad
    y_k
    =
    \bar{C}\bar{x}_k,
    \label{eq:augmented_actuator_system}
\end{align}
where
\begin{align*}
    \bar{A}
    =
    \begin{bmatrix}
        A & E_a\\
        0 & I_r
    \end{bmatrix},
    \qquad
    \bar{B}
    =
    \begin{bmatrix}
        B\\
        0
    \end{bmatrix},
    \qquad
    \bar{C}
    =
    \begin{bmatrix}
        C & 0
    \end{bmatrix},
    \qquad
    \bar{I}
    =
    \begin{bmatrix}
        0_{n\times r}\\
        I_r
    \end{bmatrix}.
\end{align*}

\begin{assumption}
\label{ass:actuator_detectability}
The augmented pair $(\bar{A},\bar{C})$ is detectable.
\end{assumption}

A proportional--integral (PI) observer for the augmented system \eqref{eq:augmented_actuator_system} is designed as
\begin{align}
    \begin{aligned}
    \hat{x}_{k+1}
    &=
    (A-L_PC)\hat{x}_k
    +
    Bu_k
    +
    E_a\hat{f}_{a,k}
    +
    L_Py_k,
    \\
    \hat{f}_{a,k+1}
    &=
    \hat{f}_{a,k}
    +
    L_I(y_k-C\hat{x}_k),
    \end{aligned}
    \label{eq:pi_observer_actuator}
\end{align}
where $L_P\in\mathbb{R}^{n\times p}$ and $L_I\in\mathbb{R}^{r\times p}$ are the proportional and integral observer gains, respectively. Equivalently, defining
\[
    \hat{\bar{x}}_k
    =
    \begin{bmatrix}
    \hat{x}_k\\
    \hat{f}_{a,k}
    \end{bmatrix},
    \qquad
    \bar{L}
    =
    \begin{bmatrix}
    L_P\\
    L_I
    \end{bmatrix},
 \]
the observer can be written compactly as
\begin{align}
    \hat{\bar{x}}_{k+1}
    =
    \bar{A}\hat{\bar{x}}_k
    +
    \bar{B}u_k
    +
    \bar{L}
    \bigl(
    y_k-\bar{C}\hat{\bar{x}}_k
    \bigr).
    \label{eq:compact_pi_observer}
\end{align}
Let
\begin{align}
    e_k
    =
    \bar{x}_k-\hat{\bar{x}}_k
    =
    \begin{bmatrix}
    e_{x,k}\\
    e_{f,k}
    \end{bmatrix}
    =
    \begin{bmatrix}
    x_k-\hat{x}_k\\
    f_{a,k}-\hat{f}_{a,k}
    \end{bmatrix}.
\end{align}
Subtracting \eqref{eq:compact_pi_observer} from \eqref{eq:augmented_actuator_system} gives the estimation error dynamics
\begin{align}
    e_{k+1}
    =
    \left(
    \bar{A}-\bar{L}\bar{C}
    \right)e_k
    +
    \bar{I}\Delta_{a,k}.
    \label{eq:actuator_error_dynamics}
\end{align}
Equation~\eqref{eq:actuator_error_dynamics} shows that the actuator fault variation $\Delta_{a,k}$ enters the estimation error dynamics as an external input. This observation forms the basis for the robustness analysis developed in the following subsections.

\subsection{Observability Condition}
\label{subsec:observability_condition}

The proportional--integral observer introduced in the
previous subsection requires detectability of the
augmented pair $(\bar{A},\bar{C})$.
The following result provides a necessary and sufficient
condition for observability, which is sufficient for
observer design.

\begin{lemma}
\label{lem:actuator_observability}
Suppose that the pair $(A,C)$ is observable. Then the augmented pair $(\bar{A},\bar{C})$ is observable if and only if
\begin{align}
    \rank
    \begin{bmatrix}
        A-I_n & E_a\\
        C & 0
    \end{bmatrix}
    =
    n+r.
    \label{eq:actuator_observability_condition}
\end{align}
\end{lemma}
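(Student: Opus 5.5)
The plan is to use the Popov--Belevitch--Hautus (PBH) test. The augmented pair $(\bar{A},\bar{C})$ is observable if and only if
\[
    \rank\begin{bmatrix}\lambda I_{n+r}-\bar{A}\\ \bar{C}\end{bmatrix}=n+r
\]
for every $\lambda\in\mathbb{C}$. Equivalently, there is no nonzero vector $v=[v_1^\top,v_2^\top]^\top$ with $\bar{A}v=\lambda v$ and $\bar{C}v=0$. Written out blockwise, such a $v$ satisfies
\[
    Av_1+E_av_2=\lambda v_1,\qquad v_2=\lambda v_2,\qquad Cv_1=0.
\]
I will split the analysis according to whether $\lambda=1$.

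\emph{Case $\lambda\neq1$.} The middle equation forces $v_2=0$. The remaining equations reduce to $Av_1=\lambda v_1$ and $Cv_1=0$. Since $(A,C)$ is observable, the PBH test for $(A,C)$ gives $v_1=0$. Hence no unobservable eigenvector exists for $\lambda\neq1$, whatever $E_a$ is.

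\emph{Case $\lambda=1$.} The middle equation is now trivially satisfied. The remaining conditions are $(A-I_n)v_1+E_av_2=0$ and $Cv_1=0$, that is,
\[
    \begin{bmatrix}A-I_n & E_a\\ C & 0\end{bmatrix}\begin{bmatrix}v_1\\ v_2\end{bmatrix}=0.
\]
So a nonzero unobservable vector at $\lambda=1$ exists exactly when this $(n+p)\times(n+r)$ matrix has a nontrivial kernel, i.e., when its rank is below $n+r$. Note also that $\begin{bmatrix}I_{n+r}-\bar{A}\\ \bar{C}\end{bmatrix}$ coincides with the stated matrix up to the sign of its first block row, which does not change the rank.

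Combining the two cases gives both directions. If the rank condition \eqref{eq:actuator_observability_condition} holds, the PBH test passes at $\lambda=1$ by the second case and at every $\lambda\neq1$ by the first case, so $(\bar{A},\bar{C})$ is observable. Conversely, if $(\bar{A},\bar{C})$ is observable, the PBH test at $\lambda=1$ gives the rank condition directly; this direction does not even use observability of $(A,C)$.

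The argument is routine, and the only step needing care is the bookkeeping at $\lambda=1$: the rank is taken with respect to the column dimension $n+r$, which implicitly requires $p\ge r$. I will remark that this yields the known necessary condition $r\le p$ (number of estimated faults not exceeding the number of outputs). I will also note that observability implies detectability, so the lemma gives a sufficient condition for Assumption~\ref{ass:actuator_detectability}.
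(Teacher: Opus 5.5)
Your proof is correct and matches the paper's argument: the PBH test split at $\lambda\neq1$, where the $(\lambda-1)I_r$ block forces $v_2=0$ and observability of $(A,C)$ gives $v_1=0$, and at $\lambda=1$, where the test reduces exactly to the stated rank condition; your eigenvector formulation also states both directions of the equivalence explicitly, whereas the paper concludes only the sufficiency direction. One cosmetic nit: $\begin{bmatrix} I_{n+r}-\bar{A}\\ \bar{C}\end{bmatrix}$ also contains an $r$-row zero block, so it matches the stated matrix only after deleting that block and flipping the sign of the first block row, and neither operation changes the rank.
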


\begin{proof}
According to the Popov--Belevitch--Hautus (PBH) observability test, the pair $(\bar{A},\bar{C})$ is observable if and only if
\begin{align}
    \rank
    \begin{bmatrix}
        \lambda I_{n+r}-\bar{A}\\
        \bar{C}
    \end{bmatrix}
    =
    n+r
    \label{eq:pbh_augmented_observability}
\end{align}
for every eigenvalue $\lambda$ of $\bar{A}$. Since
\[
    \bar{A}
    =
    \begin{bmatrix}
    A&E_a\\
    0&I_r
    \end{bmatrix},
\]
the spectrum of $\bar{A}$ consists of the eigenvalues of $A$ together with the eigenvalue $\lambda=1$ introduced by the fault dynamics.
For $\lambda\neq1$, the PBH test reduces to the observability of $(A,C)$, which holds by assumption. For $\lambda=1$, the PBH matrix becomes
\begin{align}
    \begin{bmatrix}
        I_n-A & -E_a\\
        0 & 0\\
        C & 0
    \end{bmatrix},
    \label{eq:pbh_lambda_one}
\end{align}
which has full column rank if and only if
\[
    \rank
    \begin{bmatrix}
    A-I_n & E_a\\
    C & 0
    \end{bmatrix}
    =
    n+r.
\]
Therefore, the stated rank condition guarantees observability of $(\bar{A},\bar{C})$.
\end{proof}

Under this condition, the gain $\bar{L}$ can be chosen so that
$A_e\coloneqq\bar{A}-\bar{L}\bar{C}$
is Schur stable, namely all eigenvalues of $A_e$ lie strictly inside the unit circle.

\subsection{Estimation Error Bound for Time-Varying Faults}

The observer error dynamics established in the previous subsection enable us to characterize the effect of slowly time-varying faults on the estimation performance. The following theorem shows that the observer remains robust with respect to bounded fault variations and provides an explicit bound on the estimation error.

\begin{theorem}
\label{thm:actuator_error_bound}

Suppose that $(\bar{A},\bar{C})$ is observable and that the observer gain $\bar{L}$ is chosen such that
$A_e=\bar{A}-\bar{L}\bar{C}$
is Schur stable. Assume further that the actuator fault variation satisfies
$\|\Delta_{a,k}\|\le\delta$, $\forall k\ge 0,$
for some $\delta\ge 0$. Then the estimation error system
\eqref{eq:actuator_error_dynamics}
is input-to-state stable with respect to $\Delta_{a,k}$. In particular, there exist constants $M\ge1$ and $0<\rho<1$ such that
\begin{subequations}
\begin{align}
    \|e_k\|
    \le
    M\rho^k\|e_0\|
    +
    \frac{M\|\bar{I}\|}{1-\rho}\delta,
    \qquad
    \forall k\ge0.
    \label{eq:actuator_error_bound}
\end{align}
Consequently, the estimation error is ultimately bounded as
\begin{align}
    \limsup_{k\to\infty}\|e_k\|
    \le
    \frac{M\|\bar{I}\|}{1-\rho}\delta.
    \label{eq:actuator_ultimate_bound}
\end{align}
\end{subequations}
\end{theorem}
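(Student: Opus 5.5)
The plan is to unroll the linear recursion \eqref{eq:actuator_error_dynamics} by variation of constants and bound each term using an exponential bound on the powers of the Schur-stable matrix $A_e$. Induction on $k$ gives
\begin{align*}
    e_k = A_e^k e_0 + \sum_{j=0}^{k-1} A_e^{k-1-j}\,\bar{I}\,\Delta_{a,j},\qquad k\ge 0,
\end{align*}
where the empty sum is taken as zero at $k=0$. This formula holds for any choice of gain. The only stability input needed is a bound of the form $\|A_e^k\|\le M\rho^k$ for all $k\ge 0$, with $M\ge1$ and $0<\rho<1$.

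The exponential bound on $A_e^k$ is the one step that requires an argument. Let $r_\sigma<1$ be the spectral radius of $A_e$, and fix any $\rho\in(r_\sigma,1)$. One route uses Gelfand's formula $\|A_e^k\|^{1/k}\to r_\sigma$. This gives $\|A_e^k\|\le\rho^k$ for all $k\ge K$, and one then sets $M=\max\{1,\max_{0\le k<K}\|A_e^k\|\rho^{-k}\}$.

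An alternative route, which fits the Lyapunov flavour of the paper, goes through the discrete Lyapunov equation. Since $\rho^{-1}A_e$ is Schur, the equation $\rho^{-2}A_e^\top PA_e-P=-Q$ with $Q\succ0$ has a solution $P\succ0$. This gives $A_e^\top PA_e\preceq\rho^2P$, and hence $\|A_e^k\|\le\sqrt{\lambda_{\max}(P)/\lambda_{\min}(P)}\,\rho^k$. Taking $M$ equal to this condition-number factor gives $M\ge1$ automatically.

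Either route works. The requirement $M\ge1$ is forced anyway by the case $k=0$, because $\|A_e^0\|=1$. The observability hypothesis is used only to guarantee that a gain making $A_e$ Schur exists, via Lemma~\ref{lem:actuator_observability}. The theorem simply assumes such a gain has been chosen.

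With the exponential bound in hand, the triangle inequality and $\|\Delta_{a,j}\|\le\delta$ give
\begin{align*}
    \|e_k\|\le M\rho^k\|e_0\| + M\|\bar{I}\|\delta\sum_{j=0}^{k-1}\rho^{k-1-j}
    \le M\rho^k\|e_0\| + \frac{M\|\bar{I}\|}{1-\rho}\delta,
\end{align*}
where the geometric sum is bounded by its infinite series. This is exactly \eqref{eq:actuator_error_bound}.

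The same computation with $\delta$ replaced by $\sup_{j<k}\|\Delta_{a,j}\|$ shows the input-to-state stability estimate. It has the form $\|e_k\|\le\beta(\|e_0\|,k)+\gamma(\sup_j\|\Delta_{a,j}\|)$, with $\beta(s,k)=M\rho^k s$ of class $\mathcal{KL}$ and the linear gain $\gamma(s)=M\|\bar I\|s/(1-\rho)$ of class $\mathcal{K}$. This is the standard ISS definition.

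Finally, letting $k\to\infty$ in \eqref{eq:actuator_error_bound}, the term $M\rho^k\|e_0\|$ vanishes because $\rho<1$. This gives the ultimate bound \eqref{eq:actuator_ultimate_bound}. Note that $\|\bar I\|=1$ for the induced 2-norm, but it is kept symbolic as in the statement. In the constant-fault case $\delta=0$, the same argument yields exponential convergence of $e_k$ to zero.
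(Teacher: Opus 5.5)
Your proposal is correct and follows the paper's argument: the variation-of-constants formula for $e_k$, the exponential bound $\|A_e^k\|\le M\rho^k$ from Schur stability, the triangle inequality with the geometric-series bound, and then the limit superior. The only difference is that you justify the exponential bound (via Gelfand's formula or a scaled discrete Lyapunov equation) and write out the $\mathcal{KL}$/$\mathcal{K}$ functions for the ISS claim, both of which the paper simply asserts.
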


\begin{proof}
From \eqref{eq:actuator_error_dynamics}, the solution of the error dynamics is
\begin{align}
    e_k
    =
    A_e^k e_0
    +
    \sum_{j=0}^{k-1}
    A_e^{k-1-j}
    \bar{I}\Delta_{a,j}.
    \label{eq:error_solution}
\end{align}
Since $A_e$ is Schur stable, there exist constants $M\ge1$ and $0<\rho<1$ such that
$\|A_e^k\|\le M\rho^k,\forall k\ge0.$
Taking norms in \eqref{eq:error_solution} and using the assumption
$\|\Delta_{a,k}\|\le\delta$ gives
\begin{align}
    \|e_k\|
    &\le
    \|A_e^k\|\|e_0\|
    +
    \sum_{j=0}^{k-1}
    \|A_e^{k-1-j}\|
    \|\bar{I}\|
    \|\Delta_{a,j}\|
    \nonumber\\
    &\le
    M\rho^k\|e_0\|
    +
    M\|\bar{I}\|\delta
    \sum_{j=0}^{k-1}
    \rho^{k-1-j}.
    \label{eq:error_bound_intermediate}
\end{align}
Using the geometric-series bound
\[
    \sum_{j=0}^{k-1}\rho^{k-1-j}
    =
    \sum_{\ell=0}^{k-1}\rho^\ell
    \le
    \frac{1}{1-\rho},
\]
we obtain
\[
    \|e_k\|
    \le
    M\rho^k\|e_0\|
    +
    \frac{M\|\bar{I}\|}{1-\rho}\delta,
\]
which proves \eqref{eq:actuator_error_bound}. Taking the limit superior as $k\to\infty$ immediately yields
\eqref{eq:actuator_ultimate_bound}.
\end{proof}

\begin{corollary}
\label{cor:constant_fault}
If the actuator fault is constant, that is, $\Delta_{a,k}=0$ for all $k\ge0$, then the observer error converges exponentially to the origin. In particular,
\begin{align}
    \|e_k\|
    \le
    M\rho^k\|e_0\|,
    \qquad
    \lim_{k\to\infty}e_k=0.
    \label{eq:constant_fault_convergence}
\end{align}
\end{corollary}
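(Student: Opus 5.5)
The corollary follows by specializing Theorem~\ref{thm:actuator_error_bound} to $\delta=0$. I would argue directly from the error dynamics \eqref{eq:actuator_error_dynamics} rather than only quoting the final bound, because this makes the exact (not merely ultimate) convergence transparent.

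First, when $\Delta_{a,k}=0$ for all $k\ge0$, the forcing term $\bar{I}\Delta_{a,k}$ vanishes. The error dynamics then reduce to the autonomous recursion $e_{k+1}=A_e e_k$, so that $e_k=A_e^k e_0$. Equivalently, every summand in the convolution sum of \eqref{eq:error_solution} is zero.

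Next, I invoke the same Schur-stability estimate used in the proof of Theorem~\ref{thm:actuator_error_bound}: there exist $M\ge1$ and $0<\rho<1$ with $\|A_e^k\|\le M\rho^k$ for all $k\ge0$. Submultiplicativity then gives
\[
\|e_k\|\le\|A_e^k\|\,\|e_0\|\le M\rho^k\|e_0\|.
\]
This is exactly \eqref{eq:actuator_error_bound} with $\delta=0$, and it establishes the first claim in \eqref{eq:constant_fault_convergence}.

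Finally, since $0<\rho<1$, we have $\rho^k\to0$, so $\|e_k\|\to0$ and hence $e_k\to0$. This convergence is exponential with rate $\rho$.

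There is no real obstacle here. The only point needing care is the existence of $M$ and $\rho$, which is inherited from the theorem. If justification were wanted, I would pick any $\rho$ with $\max_i|\lambda_i(A_e)|<\rho<1$. Gelfand's formula, or a Jordan-form argument, then shows that $\|A_e^k\|/\rho^k$ is bounded, and $M$ can be taken as the maximum of this bound and $1$.
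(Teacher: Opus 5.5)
Your proposal is correct and follows essentially the same route as the paper, which sets $\Delta_{a,k}=0$ in Theorem~\ref{thm:actuator_error_bound} so that only the term $M\rho^k\|e_0\|$ survives, and then concludes from $\rho^k\to0$. Your direct derivation via $e_k=A_e^k e_0$, together with the Gelfand-formula justification of the constants $M$ and $\rho$, simply makes explicit what the paper inherits from the theorem.
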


\begin{proof}
Setting $\Delta_{a,k}=0$ in Theorem~\ref{thm:actuator_error_bound} eliminates the disturbance term in \eqref{eq:actuator_error_bound}, yielding $\|e_k\|\le M\rho^k\|e_0\|.$
Since $0<\rho<1$, it follows that $\rho^k\to0$ as $k\to\infty$. Therefore,
\[
    \lim_{k\to\infty}e_k=0,
\]
which establishes exponential convergence of both the state and actuator-fault estimation errors.
\end{proof}

\begin{corollary}
\label{cor:small_sampling}
Suppose that the constants $M$ and $\rho$ in
Theorem~\ref{thm:actuator_error_bound} can be
chosen independently of the sampling period $T$.
If the actuator fault satisfies
\[
    \Delta_{a,k}=f_{a,k+1}-f_{a,k}=O(T^2)
\]
uniformly in $k$, then the estimation error is
ultimately bounded by an $O(T^2)$ neighborhood
of the origin.
\end{corollary}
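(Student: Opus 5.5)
The corollary follows directly from Theorem~\ref{thm:actuator_error_bound}; the plan is to translate the $O(T^2)$ hypothesis into an explicit choice of $\delta$ and then read off the ultimate bound \eqref{eq:actuator_ultimate_bound}.

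\textbf{Step 1 (choice of $\delta$).} The hypothesis $\Delta_{a,k}=O(T^2)$ uniformly in $k$ means there exist a constant $c>0$, independent of $k$ and $T$, and a threshold $T_0>0$ such that
\[
\|\Delta_{a,k}\|\le cT^2 \qquad \text{for all } k\ge0 \text{ and all } 0<T\le T_0 .
\]
We therefore take $\delta\coloneqq cT^2$ in Theorem~\ref{thm:actuator_error_bound}. The remaining hypotheses of that theorem are observability of $(\bar{A},\bar{C})$ and Schur stability of $A_e$. We take these as part of the standing setting for each sampling period $T$ under consideration; this is implicit in the assumption that $M$ and $\rho$ exist.

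\textbf{Step 2 (uniformity of the constant).} Theorem~\ref{thm:actuator_error_bound} gives
\[
\limsup_{k\to\infty}\|e_k\|\le \frac{M\|\bar{I}\|}{1-\rho}\,cT^2 .
\]
Note that $\|\bar{I}\|=1$ in the induced $2$-norm, since $\bar{I}$ has orthonormal columns. So $\|\bar{I}\|$ is trivially independent of $T$. By the corollary's hypothesis, $M$ and $\rho$ do not depend on $T$, and $\rho<1$ is fixed, so $1/(1-\rho)$ is a finite $T$-independent constant. Hence the bound has the form $K T^2$ with $K\coloneqq Mc/(1-\rho)$, which is an $O(T^2)$ neighborhood of the origin.

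\textbf{Step 3 (transient, for completeness).} The term $M\rho^k\|e_0\|$ in \eqref{eq:actuator_error_bound} decays to zero, so after finitely many steps the error enters any slightly enlarged ball of radius $(1+\epsilon)KT^2$.

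\textbf{Main obstacle.} The only delicate point is the uniformity in $T$. In general, when $A$ and $A_e$ come from discretizing a continuous-time model, $\rho$ tends to $1$ as $T\to0$. This would degrade the $O(T^2)$ rate. The corollary sidesteps the issue by assuming $T$-independence of $M$ and $\rho$ explicitly. The proof will make clear that this assumption is exactly where the argument uses it, and no further work is needed there.
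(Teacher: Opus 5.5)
Your proposal is correct and follows the paper's proof: set $\delta = cT^2$ in Theorem~\ref{thm:actuator_error_bound} and read off the ultimate bound $\frac{M\|\bar{I}\|}{1-\rho}\,cT^2 = O(T^2)$. Your extra remarks are accurate refinements rather than a different route: $\|\bar{I}\|=1$, the decaying transient, and the caveat that $\rho$ typically tends to $1$ as $T\to0$, which is exactly why the $T$-independence hypothesis is needed.
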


\begin{proof}
If $\|\Delta_{a,k}\|\le cT^2$ for some constant $c>0$, then Theorem~\ref{thm:actuator_error_bound} implies
\[
    \limsup_{k\to\infty}\|e_k\|
    \le
    \frac{M\|\bar{I}\|}{1-\rho}cT^2
    =
    O(T^2),
\]
which completes the proof.
\end{proof}

\subsection{Combined Actuator and Sensor Faults}

The preceding development can be extended to simultaneous actuator and sensor faults. Consider
\begin{align}
    \begin{aligned}
    x_{k+1}
    &=
    Ax_k+Bu_k+E_af_{a,k}+d_k,\\
    y_k
    &=
    Cx_k+E_sf_{s,k},
    \end{aligned}
    \label{eq:combined_fault_system}
\end{align}
where $f_{a,k}$ and $f_{s,k}$ denote actuator and sensor faults, respectively, while $d_k$ denotes an unknown bounded disturbance. Define the combined fault vector
\begin{align*}
    f_k=
    \begin{bmatrix}
        f_{a,k}\\
        f_{s,k}
    \end{bmatrix}
    \in\mathbb{R}^r.
\end{align*}
Here, $r=n_a+n_s$, where $n_a$ and $n_s$
denote the numbers of modeled actuator and sensor
fault channels, respectively. In the numerical
experiments, four fault channels are considered,
with their allocation depending on the fault scenario.

To incorporate the sensor fault into the state equation, an auxiliary variable $z_k$ is introduced as
\begin{align}
    z_{k+1}
    =
    M(y_k-z_k)
    =
    M(Cx_k+E_sf_{s,k}-z_k),
    \label{eq:auxiliary_state}
\end{align}
where $M\in\mathbb{R}^{p\times p}$ is chosen
to be nonsingular, with $-M$ Schur stable.
Defining
\[
    \bar{x}_k
    =
    \begin{bmatrix}
        x_k\\
        z_k
    \end{bmatrix},
\]
one obtains
\begin{align}
    \begin{aligned}
    \bar{x}_{k+1}
    &=A_a\bar{x}_k+B_a u_k+D_ad_k+E_a^\star f_k, \\
    z_k
    &=C_a\bar{x}_k,
    \end{aligned}
    \label{eq:combined_augmented_system}
\end{align}
with
\begin{align*}
    A_a
    =
    \begin{bmatrix}
        A & 0\\
        MC & -M
    \end{bmatrix},
    \qquad
    B_a
    =
    \begin{bmatrix}
        B\\
        0
    \end{bmatrix},
    \qquad
    D_a =
    \begin{bmatrix}
        I_n\\
        0_{p\times n}
    \end{bmatrix},
    \qquad
    E_a^\star
    =
    \begin{bmatrix}
        E_a & 0\\
        0 & ME_s
    \end{bmatrix},
    \qquad
    C_a
    =
    \begin{bmatrix}
        0 & I
    \end{bmatrix}.
\end{align*}
Assuming the combined fault satisfies $f_{k+1}=f_k+\Delta_k,$ and $\|\Delta_k\|\le\delta,$
we define the extended state
\begin{align*}
    \tilde{x}_k
    =
    \begin{bmatrix}
        \bar{x}_k\\
        f_k
    \end{bmatrix}.
\end{align*}
Then
\begin{align}
    \begin{aligned}
    \tilde{x}_{k+1}
    &=\tilde{A}\tilde{x}_k
    +\tilde{B}u_k
    +\tilde{D}d_k
    +\tilde{I}\Delta_k,\\
    z_k
    &=\tilde{C}\tilde{x}_k,
    \end{aligned}
    \label{eq:combined_extended_system}
\end{align}
where
\begin{align*}
    \tilde{A}
    =
    \begin{bmatrix}
        A_a & E_a^\star\\
        0 & I_r
    \end{bmatrix},
    \qquad
    \tilde{B}
    =
    \begin{bmatrix}
        B_a\\
        0
    \end{bmatrix},
    \qquad
    \tilde{D}
    =
    \begin{bmatrix}
        D_a\\
        0
    \end{bmatrix},
    \qquad
    \tilde{C}
    =
    \begin{bmatrix}
        C_a & 0
    \end{bmatrix},
    \qquad
    \tilde{I}
    =
    \begin{bmatrix}
        0\\
        I_r
    \end{bmatrix}.
\end{align*}
The augmented pair $(\tilde{A},\tilde{C})$
is assumed to be detectable, so that an
observer gain $\tilde{L}$ can be selected
such that $\tilde{A}-\tilde{L}\tilde{C}$
is Schur stable.
In particular, a necessary condition is
\begin{align}
    \rank
    \begin{bmatrix}
        A-I_n & E_a & 0\\
        C & 0 & E_s
    \end{bmatrix}
    =n+r,
    \label{eq:combined_fault_rank}
\end{align}
which must be verified for the selected fault channels.

A PI observer can then be constructed as
\begin{align}
    \hat{\tilde{x}}_{k+1}
    &=
    \tilde{A}\hat{\tilde{x}}_k
    +
    \tilde{B}u_k
    +
    \tilde{L}
    \bigl(
        z_k-\tilde{C}\hat{\tilde{x}}_k
    \bigr),
    \label{eq:combined_pi_observer}
\end{align}
where $\tilde{L}$ is chosen such that $\tilde{A}-\tilde{L}\tilde{C}$
is Schur stable. The corresponding estimation error satisfies
\begin{align}
    \tilde{e}_{k+1}
    =
    (\tilde{A}-\tilde{L}\tilde{C})\tilde{e}_k
    +
    \tilde{D}d_k
    +
    \tilde{I}\Delta_k.
    \label{eq:combined_error_dynamics}
\end{align}
Equation~\eqref{eq:combined_error_dynamics} shows that both the disturbance $d_k$ and the fault variation $\Delta_k$ act as external inputs to the observer error dynamics.

This result shows that the PI observer provides exact asymptotic reconstruction for constant actuator and sensor faults in the absence of disturbances, and practical reconstruction for bounded disturbances and slowly time-varying faults. The estimation accuracy improves as the disturbance and fault variation bounds decrease.

\begin{corollary}\label{cor:combined_fault_bound}
Consider the combined actuator--sensor fault model in \eqref{eq:combined_extended_system}
and the PI observer in \eqref{eq:combined_pi_observer}.
Suppose that the gain $\tilde{L}$ is chosen such that
$\tilde{A}-\tilde{L}\tilde{C}$ is Schur stable. Furthermore, assume that
\[
    \|d_k\|\le d_{\max},
    \qquad
    \|\Delta_k\|\le\delta,
    \qquad
    \forall k\ge0.
\]
Then the combined actuator--sensor estimation error is ultimately bounded. In particular, there exist constants
$\tilde{M}\ge1$
and
$0<\tilde{\rho}<1$
such that
\begin{align}
    \|\tilde{e}_k\|
    \le
    \tilde{M}\tilde{\rho}^k
    \|\tilde{e}_0\|
    +
    \frac{\tilde{M}}{1-\tilde{\rho}}
    \left(
        \|\tilde{D}\|d_{\max}
        +
        \|\tilde{I}\|\delta
    \right),
    \qquad
    \forall k\ge0.
    \label{eq:combined_error_bound}
\end{align}
\end{corollary}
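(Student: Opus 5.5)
The plan is to mirror the proof of Theorem~\ref{thm:actuator_error_bound}, treating the combined input $\tilde{D}d_k+\tilde{I}\Delta_k$ in \eqref{eq:combined_error_dynamics} as a single bounded exogenous signal. Set $\tilde{A}_e\coloneqq\tilde{A}-\tilde{L}\tilde{C}$ and $w_k\coloneqq\tilde{D}d_k+\tilde{I}\Delta_k$. Iterating \eqref{eq:combined_error_dynamics} gives the variation-of-constants formula
\begin{align*}
    \tilde{e}_k=\tilde{A}_e^k\tilde{e}_0+\sum_{j=0}^{k-1}\tilde{A}_e^{k-1-j}w_j,
\end{align*}
which can be checked by induction on $k$, exactly as \eqref{eq:error_solution} was obtained.

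The first step is to produce the constants $\tilde{M}\ge1$ and $0<\tilde{\rho}<1$ with $\|\tilde{A}_e^k\|\le\tilde{M}\tilde{\rho}^k$ for all $k\ge0$. Since $\tilde{A}_e$ is Schur stable, its spectral radius $\varrho(\tilde{A}_e)$ is strictly less than $1$. Pick any $\tilde{\rho}\in(\varrho(\tilde{A}_e),1)$. Gelfand's formula $\|\tilde{A}_e^k\|^{1/k}\to\varrho(\tilde{A}_e)$ shows that $\|\tilde{A}_e^k\|\le\tilde{\rho}^k$ for all $k$ beyond some $K$. Then set
\begin{align*}
    \tilde{M}=\max\Bigl\{1,\max_{0\le k\le K}\|\tilde{A}_e^k\|\tilde{\rho}^{-k}\Bigr\}.
\end{align*}
Alternatively, one can solve a discrete Lyapunov equation $\tilde{A}_e^\top P\tilde{A}_e-P=-Q$ and use the condition number of $P$. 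This is the same fact the paper invoked without comment in Theorem~\ref{thm:actuator_error_bound}, so I would cite it in the same way.

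Next, take norms and apply the triangle inequality. The hypotheses give
\begin{align*}
    \|w_j\|\le\|\tilde{D}\|d_{\max}+\|\tilde{I}\|\delta,
\end{align*}
and hence
\begin{align*}
    \|\tilde{e}_k\|\le\tilde{M}\tilde{\rho}^k\|\tilde{e}_0\|+\tilde{M}\bigl(\|\tilde{D}\|d_{\max}+\|\tilde{I}\|\delta\bigr)\sum_{\ell=0}^{k-1}\tilde{\rho}^{\ell}.
\end{align*}
Bounding the geometric sum by $1/(1-\tilde{\rho})$ yields \eqref{eq:combined_error_bound}. Taking $\limsup_{k\to\infty}$ and using $\tilde{\rho}^k\to0$ then gives the ultimate bound
\begin{align*}
    \frac{\tilde{M}}{1-\tilde{\rho}}\bigl(\|\tilde{D}\|d_{\max}+\|\tilde{I}\|\delta\bigr),
\end{align*}
which is the asserted ultimate boundedness.

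There is no real obstacle here; the argument is routine. The only point needing care is notational: the symbol $M$ is already used for the auxiliary filter matrix in \eqref{eq:auxiliary_state}, which is why the statement uses the tilded constants, and I would keep them separate. I would also remark that the error recursion \eqref{eq:combined_error_dynamics} itself relies on $\hat{\tilde{x}}$ being driven by the measurable signal $z_k$, which is generated from $y_k$. No property of the filter matrix $M$ beyond its appearance in $\tilde{A}$ is needed for the bound.
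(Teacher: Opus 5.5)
Your proposal is correct and follows the paper's proof. The paper likewise treats $\tilde{D}d_k+\tilde{I}\Delta_k$ as a bounded input to the Schur-stable recursion \eqref{eq:combined_error_dynamics} and repeats the variation-of-constants and geometric-series argument of Theorem~\ref{thm:actuator_error_bound}; your Gelfand-formula construction of $\tilde{M}\ge1$ and $0<\tilde{\rho}<1$ just makes explicit the exponential bound on $\|(\tilde{A}-\tilde{L}\tilde{C})^k\|$ that the paper takes for granted.
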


\begin{proof}
From \eqref{eq:combined_error_dynamics},
\[
    \tilde{e}_{k+1}
    =
    (\tilde{A}-\tilde{L}\tilde{C})\tilde{e}_k
    +
    \tilde{D}d_k
    +
    \tilde{I}\Delta_k.
\]
Since $\tilde{A}-\tilde{L}\tilde{C}$ is Schur stable, the proof follows exactly as in Theorem~\ref{thm:actuator_error_bound} by treating
$\tilde{D}d_k+\tilde{I}\Delta_k$
as the bounded input to the error system.
Applying the same geometric-series argument yields
\eqref{eq:combined_error_bound}.
\end{proof}

\section{Fault-Tolerant Control}

The fault estimation framework developed in the previous section is now integrated into the control architecture to compensate for actuator and sensor faults online. Rather than redesigning the controller after a fault occurs, the proposed strategy reconstructs the fault signals using the proportional--integral observer and compensates them before they propagate through the closed-loop system. The resulting architecture is illustrated in Fig.~\ref{F2}.

Let $\hat{f}_{a,k}$ and $\hat{f}_{s,k}$ denote the actuator and sensor fault estimates obtained from the PI observer. The measured output is first corrected according to
\begin{align}
    y_{c,k}
    =
    y_k
    -
    E_s\hat{f}_{s,k},
    \label{eq:corrected_output}
\end{align}
where $y_{c,k}$ denotes the compensated measurement supplied to the feedback controller.
For actuator fault compensation, we assume
matched additive faults, namely $E_a=B$,
with $f_{a,k}\in\mathbb{R}^m$.
Similarly, the actuator fault estimate is incorporated through the compensated control input
\begin{align}
    u_{c,k}
    =
    u_k
    -
    \hat{f}_{a,k},
    \label{eq:compensated_input}
\end{align}
which is transmitted to the UAV actuators.

Consequently, the fault-tolerant closed-loop system becomes
\begin{subequations}
\begin{align}
    x_{k+1}
    =
    Ax_k+Bu_k
    +B(f_{a,k}-\hat{f}_{a,k})
    +d_k,
    \label{eq:ftc_system}
\end{align}
with corrected measurement
\begin{align}
    y_{c,k}
    =
    Cx_k
    +
    E_s
    \Big[
        f_{s,k}-\hat{f}_{s,k}
    \Big].
    \label{eq:ftc_output}
\end{align}
\end{subequations}
When the observer converges, 
$\hat{f}_{a,k}\rightarrow f_{a,k}$ and $\hat{f}_{s,k}\rightarrow f_{s,k},$
the effects of both actuator and sensor faults are asymptotically removed from the closed-loop system. For slowly varying faults, Theorem~\ref{thm:actuator_error_bound} and Corollary~\ref{cor:combined_fault_bound} imply that the residual fault
\[
    f_k-\hat{f}_k
\]
remains uniformly bounded. 
Consequently, the residual fault effects remain
bounded under the conditions of the observer
stability analysis. Their influence on closed-loop
tracking performance depends on the robustness
properties of the feedback controller.

The compensated measurement $y_{c,k}$ and compensated control input $u_{c,k}$ are subsequently employed by the learning-oriented controller developed in the following section.

\begin{figure}[t!]
    \centering
    \includegraphics[width=.6\linewidth]{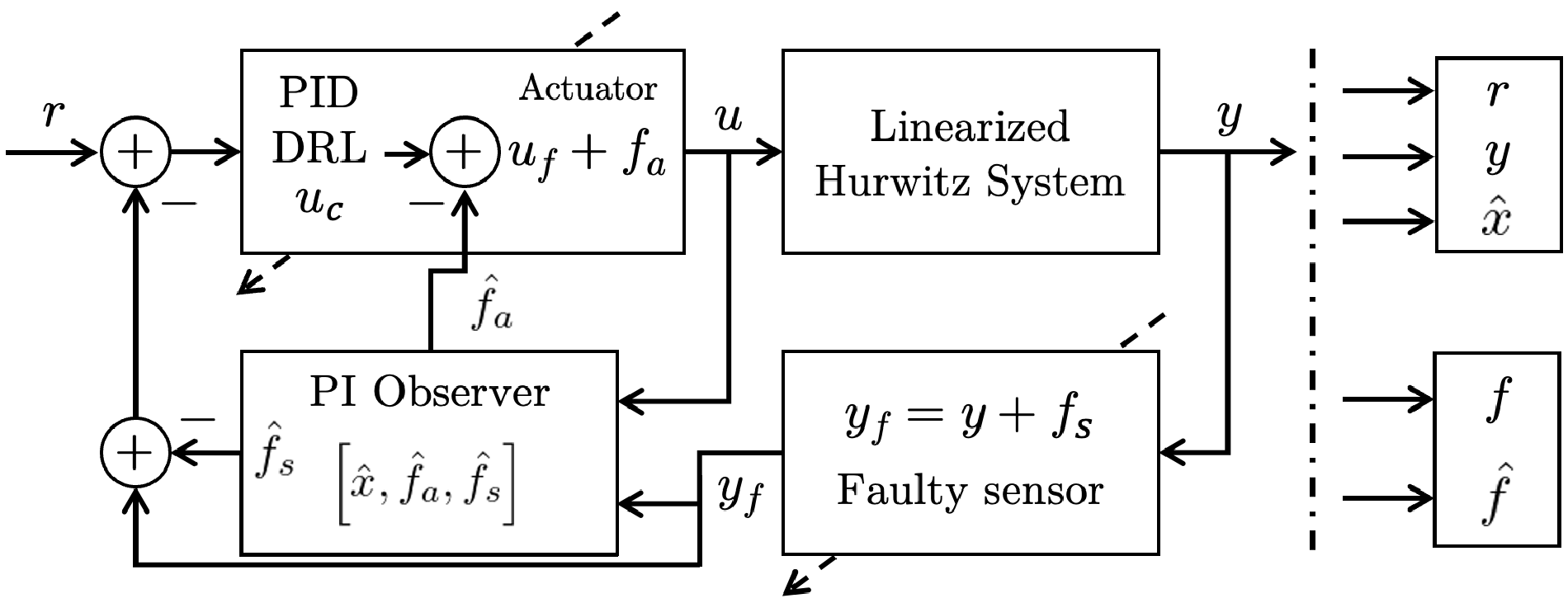}
    \caption{Fault-tolerant control scheme with faults $f=[f_a,f_s]^\top$ using PI Observer.}
    \label{F2}
\end{figure}

\section{Learning-Oriented Control}

To compensate for nonlinear uncertainties and improve the closed-loop tracking performance, a Deep Deterministic Policy Gradient (DDPG) controller is adopted. Unlike stochastic policy gradient methods, DDPG employs a deterministic policy
\begin{align}
    u=\mu_\theta(x),
    \label{eq:deterministic_policy}
\end{align}
which directly maps the observed state $x\in S_X$ to the continuous control action $u\in S_A$. Here, $\theta$ denotes the parameter vector of the actor network.
The objective is to maximize the expected discounted return
\begin{align}
    J(\mu_\theta)
    =
    \mathbb{E}\!\left[r_t^\gamma\,|\,\mu_\theta\right],
    \label{eq:expected_return}
\end{align}
where
\begin{align*}
    J(\mu_\theta)
    =
    \int_{S_X}
    \rho^\mu(x)
    r\bigl(x,\mu_\theta(x)\bigr)\,dx,
\end{align*}
with $\rho^\mu(x)$ denoting the unnormalized
discounted state-occupancy distribution.

The cumulative discounted reward is defined as
\begin{align}
    r_t^\gamma
    =
    \sum_{k=t}^{\infty}
    \gamma^{k-t}
    r(x_k,u_k),
    \label{eq:discounted_reward}
\end{align}
where $\gamma\in(0,1)$ is the discount factor. The policy parameters are updated by gradient ascent according to
\begin{align}
    \theta
    \leftarrow
    \theta
    +
    \alpha_\ell
    \nabla_\theta
    J(\mu_\theta),
    \label{eq:policy_update}
\end{align}
where $\alpha_\ell>0$ denotes the learning rate.

Direct evaluation of $\nabla_\theta J(\mu_\theta)$ is computationally demanding because it depends on the unknown action-value function. To overcome this difficulty, DDPG employs an actor--critic architecture consisting of an actor network $\mu_\theta$ and a critic network $Q^\vartheta$, where $\vartheta$ denotes the critic parameters. The deterministic policy gradient theorem gives the
policy gradient in terms of the true action-value
function $Q^\mu$, which is approximated by the
critic network $Q^\vartheta$ in DDPG:
\begin{align}
    \nabla_\theta J(\mu_\theta)
    =
    \int_{S_X}
    \rho^\mu(x)
    \nabla_\theta\mu_\theta(x)
    \nabla_u Q^\mu(x,u)
    \Big|_{u=\mu_\theta(x)}
    dx,
    \label{eq:deterministic_policy_gradient}
\end{align}

Accordingly, the actor parameters are updated as
\begin{align}
    \theta
    \leftarrow
    \theta
    +
    \alpha_\ell
    \nabla_\theta\mu_\theta(x)
    \nabla_uQ^\vartheta(x,u)
    \Big|_{u=\mu_\theta(x)}.
    \label{eq:actor_update}
\end{align}

The overall DDPG training architecture adopted in this
work is illustrated in Fig.~\ref{F3}. The reinforcement
learning controller is designed for the yaw channel
($y_4=\psi$), while the altitude, roll, and pitch motions
are regulated by the nominal controller. The DDPG agent
is trained without actuator or sensor faults. During
implementation, the trained policy is integrated with
the PI-observer-based fault-tolerant control architecture
to compensate for actuator and sensor faults.

The observation vector is chosen as
\begin{align}
    x_r
    =
    \begin{bmatrix}
        y_4 &
        e &
        e_i
    \end{bmatrix}^{\!\top},
    \label{eq:observation_vector}
\end{align}
where $e$ denotes the yaw tracking error and $e_i$ is its integral. 
The upper and lower bounds of the observation space are chosen as
$S_{X_u}=[\pi,\infty,\infty]^\top$ and $S_{X_l}=[-\pi,-\infty,-\infty]^\top$, respectively.
The reward function is defined as
\begin{align}
    r_t
    =
    \begin{cases}
        10, & |e|<\varepsilon,\\
        1, & \text{otherwise},
    \end{cases}
    \label{eq:reward_function}
\end{align}
where $\varepsilon>0$ is a prescribed tracking tolerance. 
Consequently, the agent is encouraged to keep
the yaw tracking error within the prescribed
tolerance band during training.

\begin{figure}[t!]
    \centering
    \includegraphics[width=.5\linewidth]{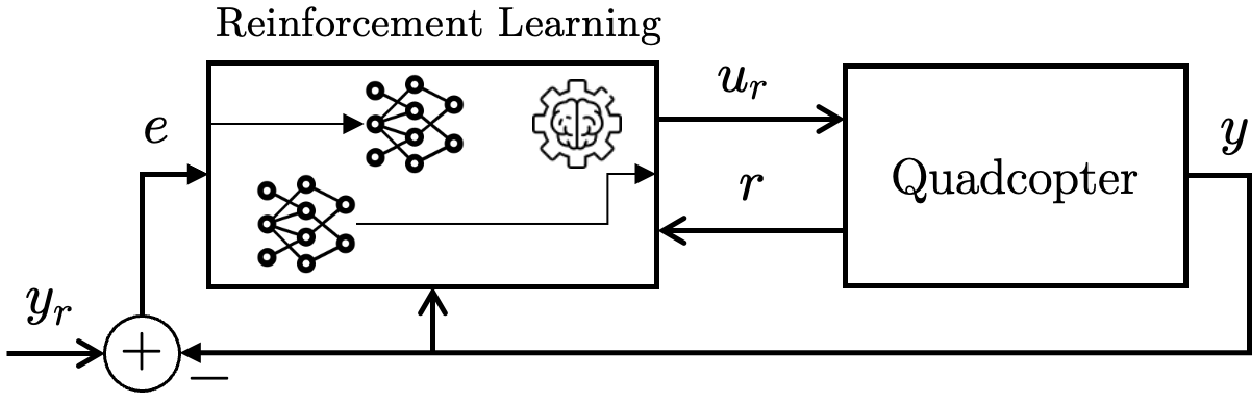}
    \caption{Scheme of training for DRL on single state $y_4=\psi$.}
    \label{F3}
\end{figure}

\section{Numerical Examples and Simulations}

This section presents numerical simulations to validate the proposed proportional--integral (PI) observer and the fault-tolerant learning-oriented control framework. The simulations consist of two parts. First, the estimation performance of the proposed PI observer is evaluated under various actuator and sensor fault scenarios. Second, the estimated faults are incorporated into the fault-tolerant controller, where a hybrid PID--DDPG strategy is employed to compensate for the faults and improve the tracking performance.

The UAV parameters are chosen as follows: the mass is $m=0.468~\mathrm{kg}$, the translational drag coefficients are $A_x=0.3$, $A_y=0.3$, and $A_z=0.25$, the rotational drag coefficient is $A_r=0.2$, and the inertia matrix is characterized by
\[
    I_{xx}=4.856\times10^{-3},
    \qquad
    I_{yy}=I_{xx},
    \qquad
    I_{zz}=8.801\times10^{-3}.
\]
The continuous-time linearized subsystem is
\begin{align*}
    A_c
    =
    \begin{bmatrix}
        0_{4\times4} & I_4\\
        0_{4\times4} &
        \diag(\Psi)\diag(\Psi_A)
    \end{bmatrix},\qquad
    B_c
    =
    \begin{bmatrix}
        0_{4\times4}\\
        \diag(\Psi)
    \end{bmatrix}
\end{align*}
where
\[
    \Psi
    =
    \begin{bmatrix}
    m^{-1} &
    I_{xx}^{-1} &
    I_{yy}^{-1} &
    I_{zz}^{-1}
    \end{bmatrix},
    \]
    and
    \[
    \Psi_A
    =
    \begin{bmatrix}
    A_z &
    A_r &
    A_r &
    A_r
    \end{bmatrix}^{\!\top}.
\]
The discrete-time matrices $A$ and $B$ used
in the observer and controller are obtained
from $(A_c,B_c)$ using the sampling period $T$.

The controller combines a conventional PID controller with the proposed DDPG policy. Only the yaw state $y_4$ is used for reinforcement learning, whereas the learned policy is subsequently applied to the remaining attitude channels $y_2$ and $y_3$. The DDPG agent is trained for $100~\mathrm{s}$ over $51$ episodes with randomly generated reference trajectories. The observation vector is given by $x_r=[y_4,e,e_i]^\top$, where $e$ denotes the tracking error and $e_i$ its integral. The reward is selected as $r_k=10$ whenever $|e|<\varepsilon$, with $\varepsilon=0.01$, and $r_k=1$ otherwise. The actor and critic learning rates are chosen as $\alpha_a=10^{-4}$ and $\alpha_c=10^{-3}$, respectively, with discount factor $\gamma=0.9$, mini-batch size $64$, and maximum episode and step numbers equal to $1000$.

To evaluate the proposed observer and fault-tolerant controller, three representative fault scenarios are considered:

\begin{enumerate}
    \item Four simultaneous actuator faults occurring at $t=50~\mathrm{s}$ with loss of effectiveness $\alpha=0.5$ and bias
    $\beta=[2,0.3,0.15,0.1]$.
    
    \item Four actuator faults occurring sequentially at
    $t=[50,60,70,80]~\mathrm{s}$ with identical bias $\beta$ and loss of effectiveness $\alpha=0.75$.
    
    \item Two actuator faults introduced at $t=50~\mathrm{s}$ followed by two sensor faults at $t=[60,70]~\mathrm{s}$ with $\alpha=1$ and
    $\beta=[5,0.3,0.15,0.01]$, corresponding to
    $f=[f_a,f_a,f_s,f_s]^\top$.
\end{enumerate}

Figure~\ref{f4a} illustrates the influence of actuator loss of effectiveness on the UAV dynamics, whereas Fig.~\ref{f4b} compares the tracking performance of the conventional PID controller and the proposed DDPG controller for the three attitude states $y_2$, $y_3$, and $y_4$. The DDPG controller achieves improved tracking accuracy and noticeably smoother responses despite being trained only on the yaw channel.

The estimation results of the proposed PI observer are first presented in Fig.~\ref{f4c}--Fig.~\ref{f4d}. These simulations illustrate the observer's
fault-estimation performance under the considered
actuator-only and combined actuator--sensor
fault scenarios. The estimated fault signals are subsequently used to compensate the corrupted actuator inputs and sensor measurements according to the fault-tolerant control strategy developed in Section~5.

Finally, the complete fault-tolerant learning-oriented control framework is evaluated using the estimated faults obtained from the PI observer. Figure~\ref{f5} shows that the proposed controller successfully maintains accurate trajectory tracking under all three fault scenarios. 
These numerical results illustrate the tracking
performance of the integrated PI-observer-based
FTC and hybrid PID--DDPG controller under the
considered fault scenarios.

The additional PI-observer simulation is carried out for two time-varying actuator faults under bounded disturbances. In this case, the faulty actuator channels are selected as $1$ and $4$, namely
\[
    f_{a,k}
    =
    \begin{bmatrix}
        f_{a,1,k}\\
        f_{a,4,k}
    \end{bmatrix}.
\]
The time-varying actuator faults are chosen as
\[
    f_{a,1}(t)
    =
    0.30+0.8\sin(0.8t),
    \qquad
    f_{a,4}(t)
    =
    0.55+0.4\cos(0.5t).
\]
The bounded disturbance is selected as
\[
    d(t)
    =
    0.5
    \begin{bmatrix}
        \sin(0.15t) &
        \cos(0.12t) &
        \sin(0.10t) &
        \cos(0.08t) &
        \sin(0.20t) &
        \cos(0.18t) &
        \sin(0.16t) &
        \cos(0.14t)
    \end{bmatrix}^{\!\top}.
\]
The corresponding actuator fault variation is computed as
\[
    \Delta_{a,k}
    =
    f_{a,k+1}-f_{a,k}.
\]
The observer gain is obtained by pole placement with poles uniformly distributed in the interval $[0.2,0.75]$. The estimation performance is shown in Fig.~\ref{fig:time_varying_actuator_faults}, where the PI observer accurately tracks the time-varying actuator faults despite the presence of bounded disturbances.

Experimental validation on the Quanser QDrone~2 autonomous aerial vehicle is considered as future work. The QDrone 2 is an open-architecture research-grade drone that features a robust on-board NVIDIA Jetson Xavier NX system-on-module, multiple high-resolution cameras, and built-in WiFi, making it suitable for advanced research applications.

\begin{figure}[t!]
    \centering
    \subfloat[\label{f4a}\scriptsize System response under different factors $\alpha$.]{
    \includegraphics[width=.33\linewidth]{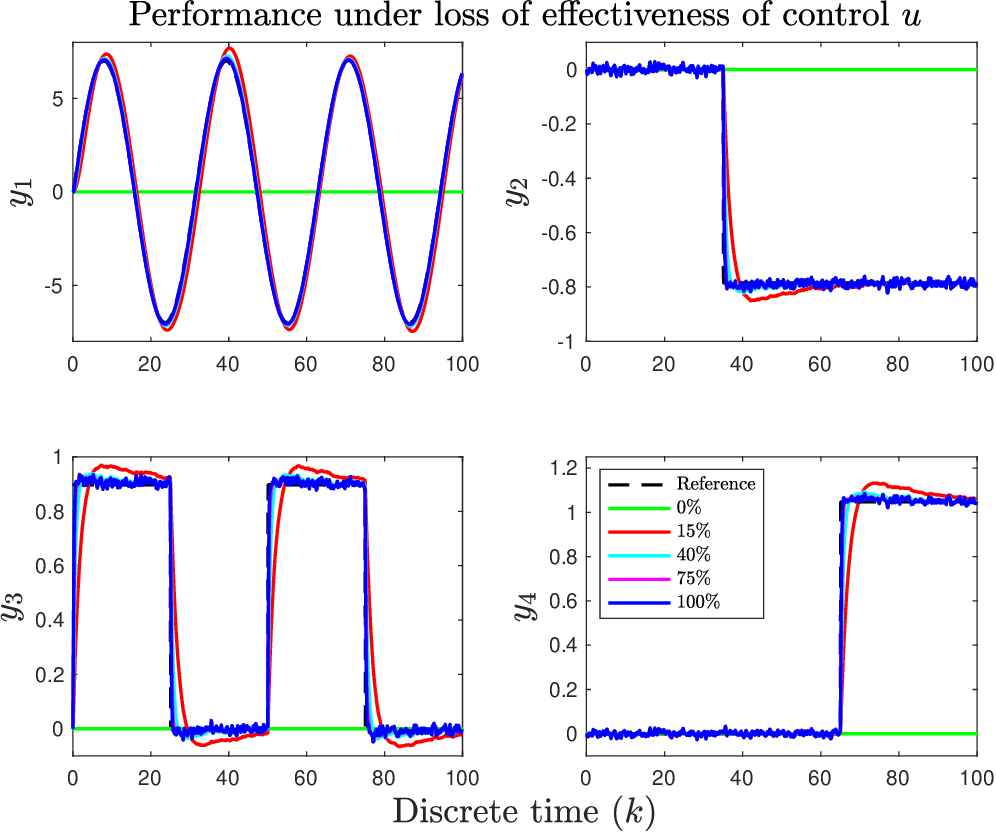}}\qquad
    \subfloat[\label{f4b}\scriptsize Tracking comparison: PID and DDPG controllers.]{
    \includegraphics[width=.335\linewidth]{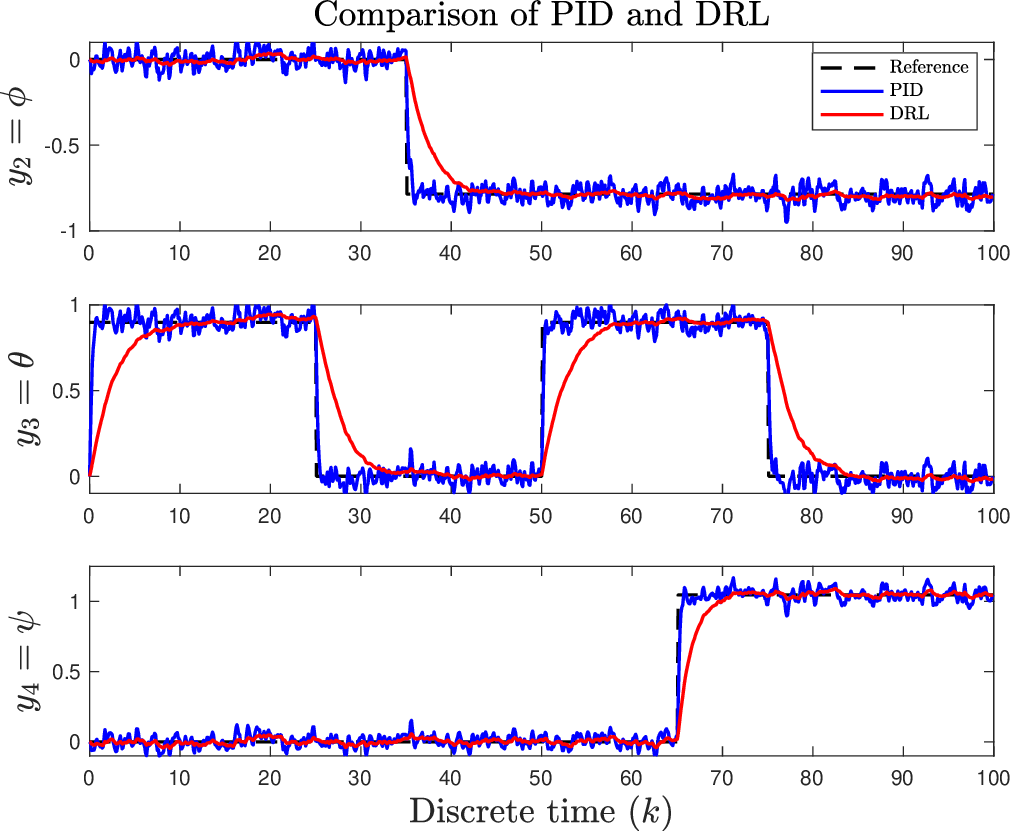}}\\
    \subfloat[\label{f4c}\scriptsize Four faults with different occurrence times.]{
    \includegraphics[width=.33\linewidth]{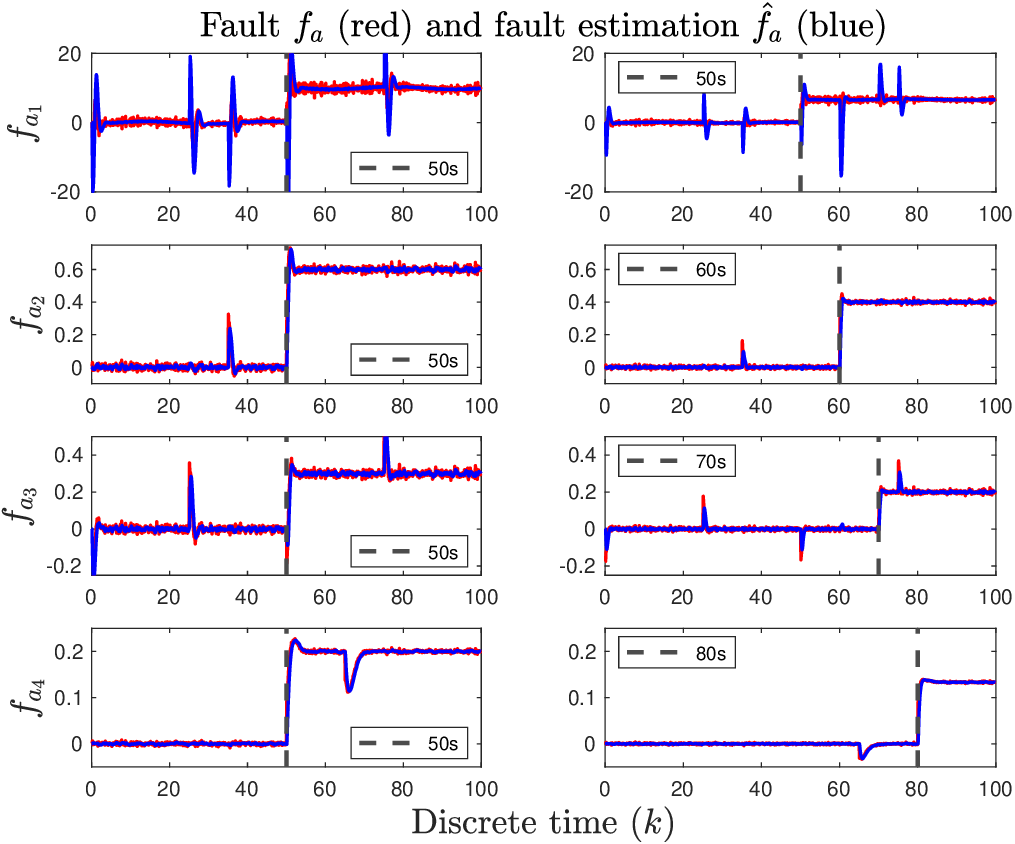}}\qquad
    \subfloat[\label{f4d}\scriptsize Two actuator and two sensor faults.]{
    \includegraphics[width=.33\linewidth]{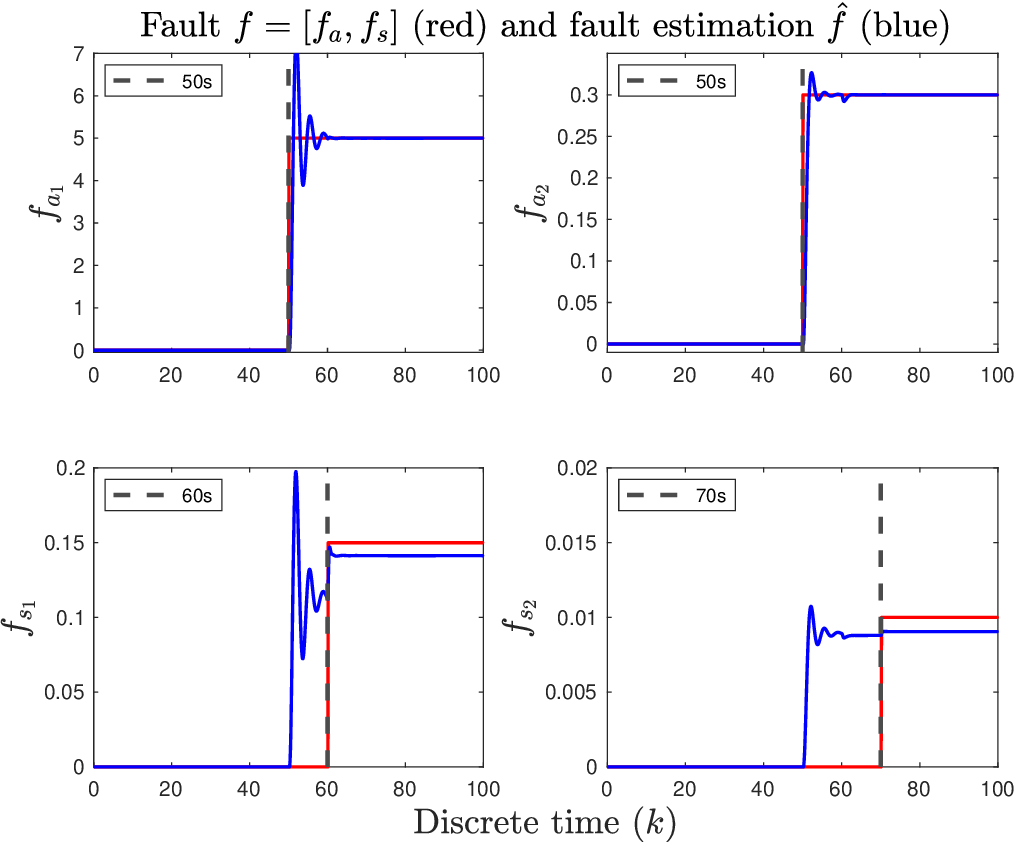}}
    \caption{Numerical simulation results. (a) Closed-loop response under different actuator loss-of-effectiveness factors. (b) Tracking performance comparison between the conventional PID controller and the proposed DDPG controller, where the learning agent is trained using only the yaw channel. (c)--(d) Fault estimation performance of the proposed proportional--integral observer for the considered actuator and actuator--sensor fault scenarios.}
    \label{f4}
\end{figure}
\begin{figure*}[!t]
    \centering
    \subfloat[\label{f5a}\scriptsize FTC under Type-1 actuator faults.]{
    \includegraphics[width=.33\linewidth]{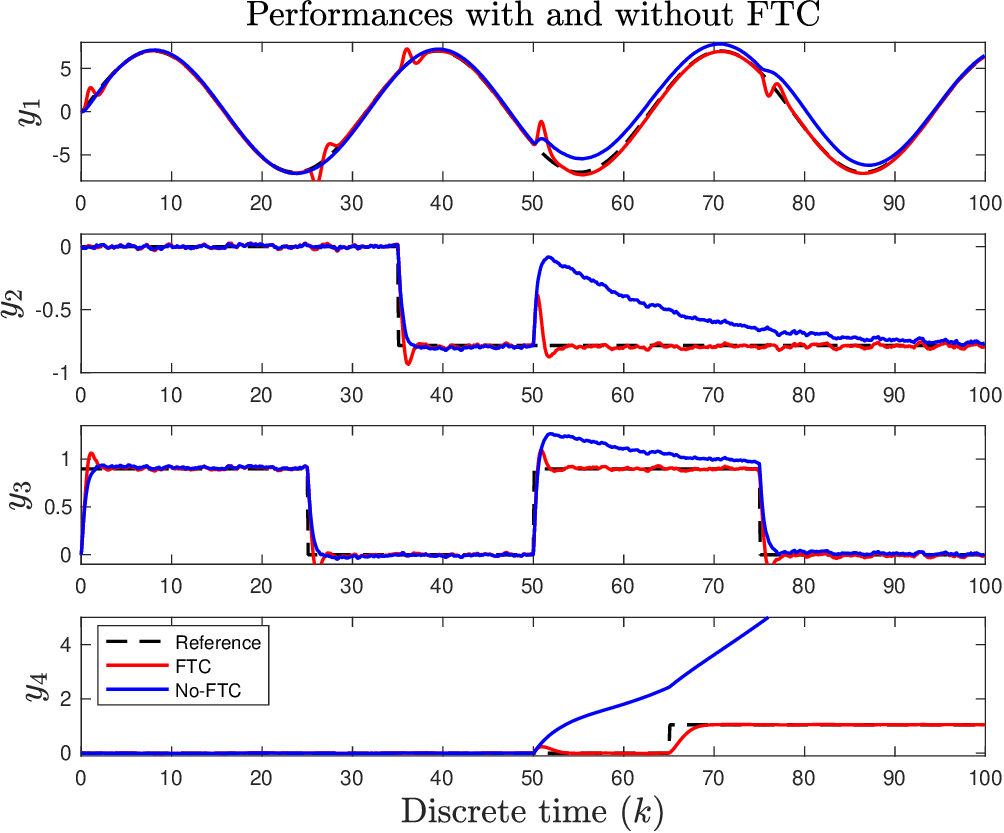}}
    \subfloat[\label{f5b}\scriptsize FTC under Type-2 actuator faults.]{
    \includegraphics[width=.33\linewidth]{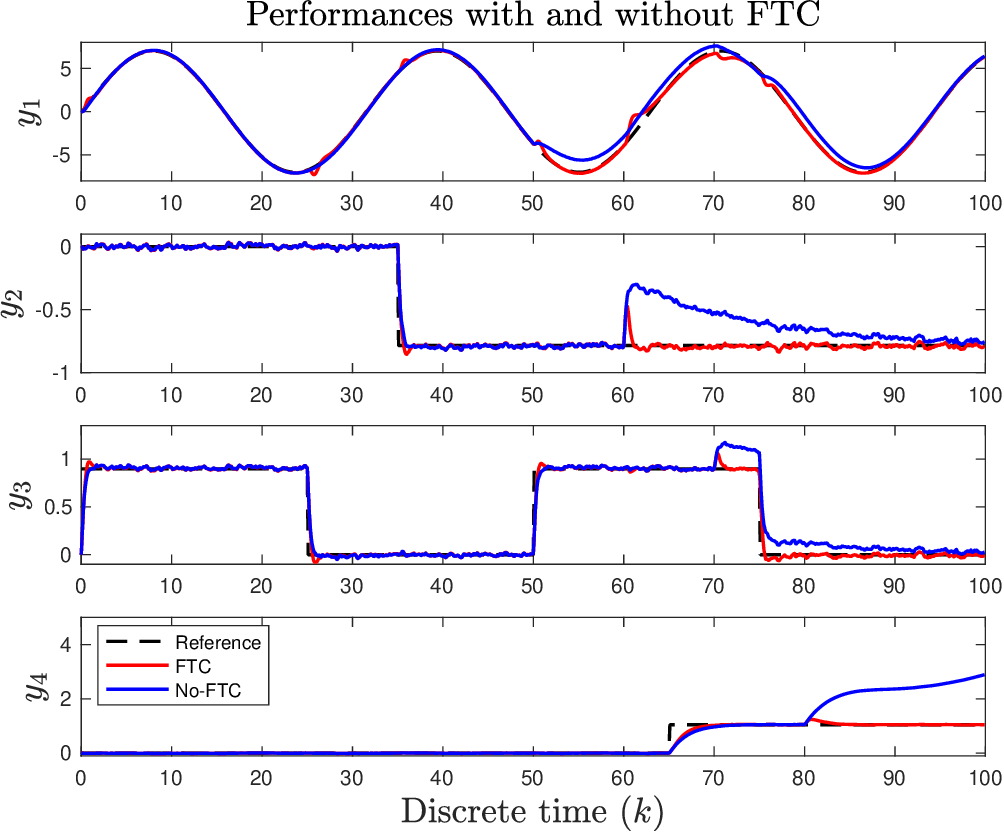}}
    \subfloat[\label{f5c}\scriptsize FTC under Type-3 actuator--sensor faults.]{
    \includegraphics[width=.33\linewidth]{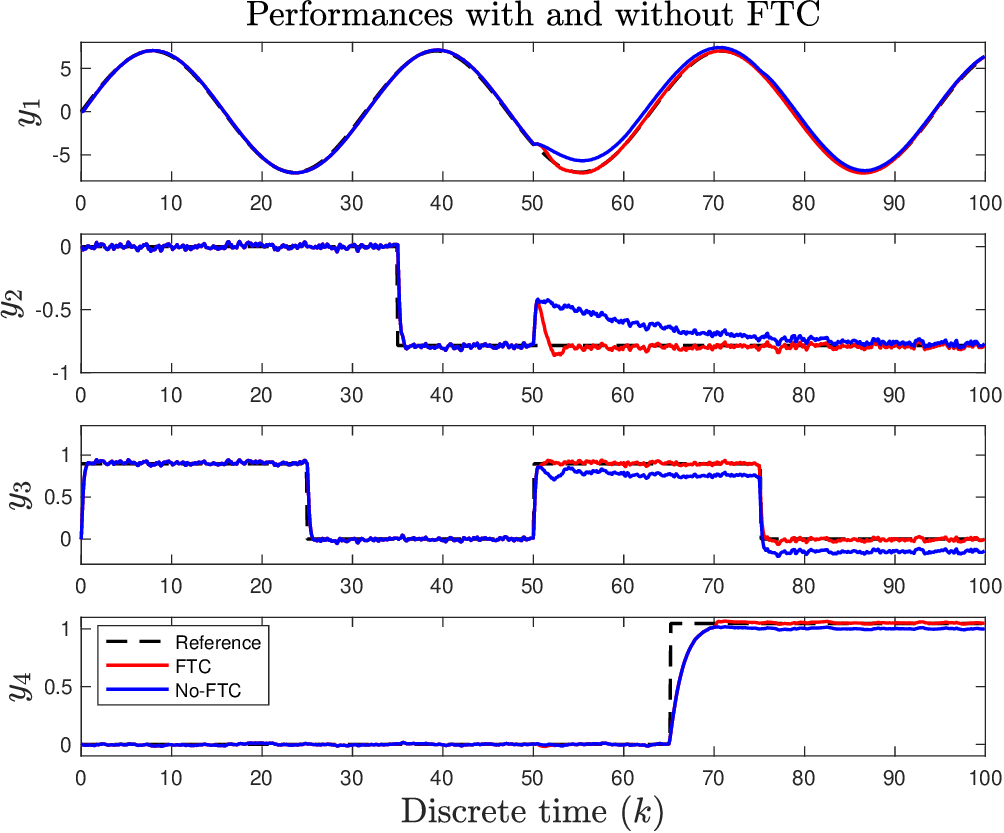}}
    \caption{Performance comparison between the proposed fault-tolerant control (FTC) scheme and the uncompensated controller under the three fault scenarios described in the numerical studies. The corresponding fault estimation results are presented in Figs.~\ref{f4c} and \ref{f4d}.}
    \label{f5}
\end{figure*}
\begin{figure*}[!t]
    \centering
    \subfloat[\label{fig:tv_fault_estimation}\scriptsize Time-varying actuator fault estimation.]{
    \includegraphics[width=.33\linewidth]{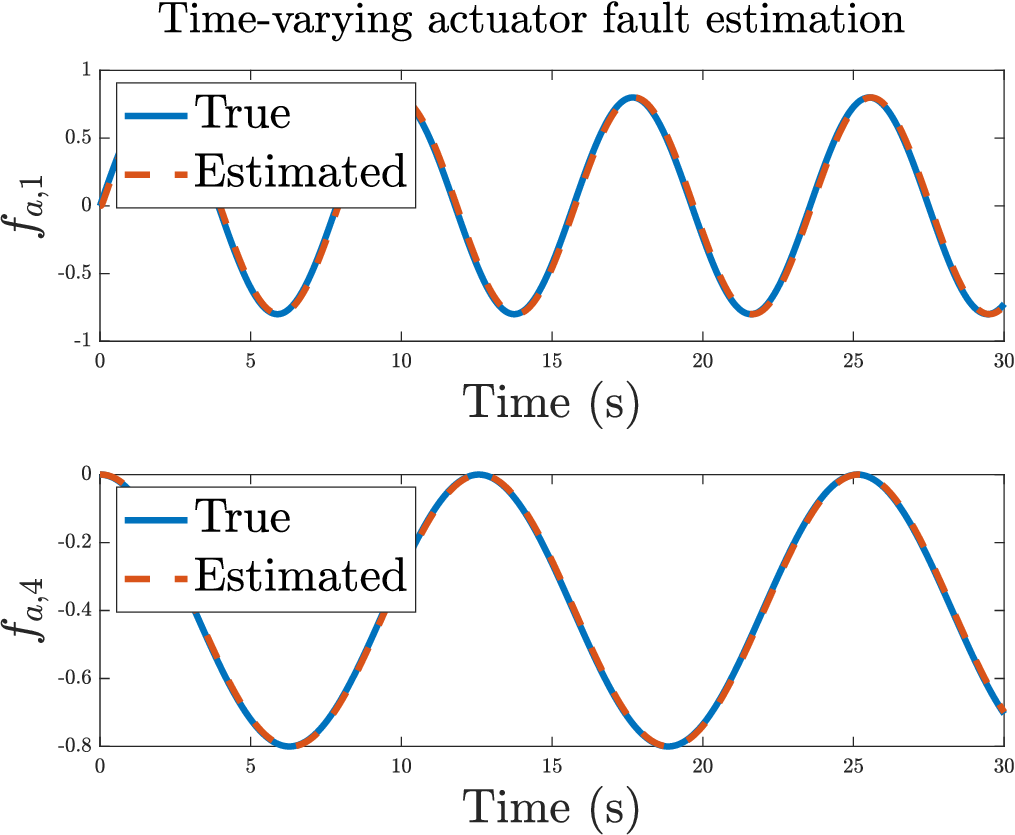}}
    \qquad
    \subfloat[\label{fig:tv_fault_error}\scriptsize Estimation error norm.]{
    \includegraphics[width=.325\linewidth]{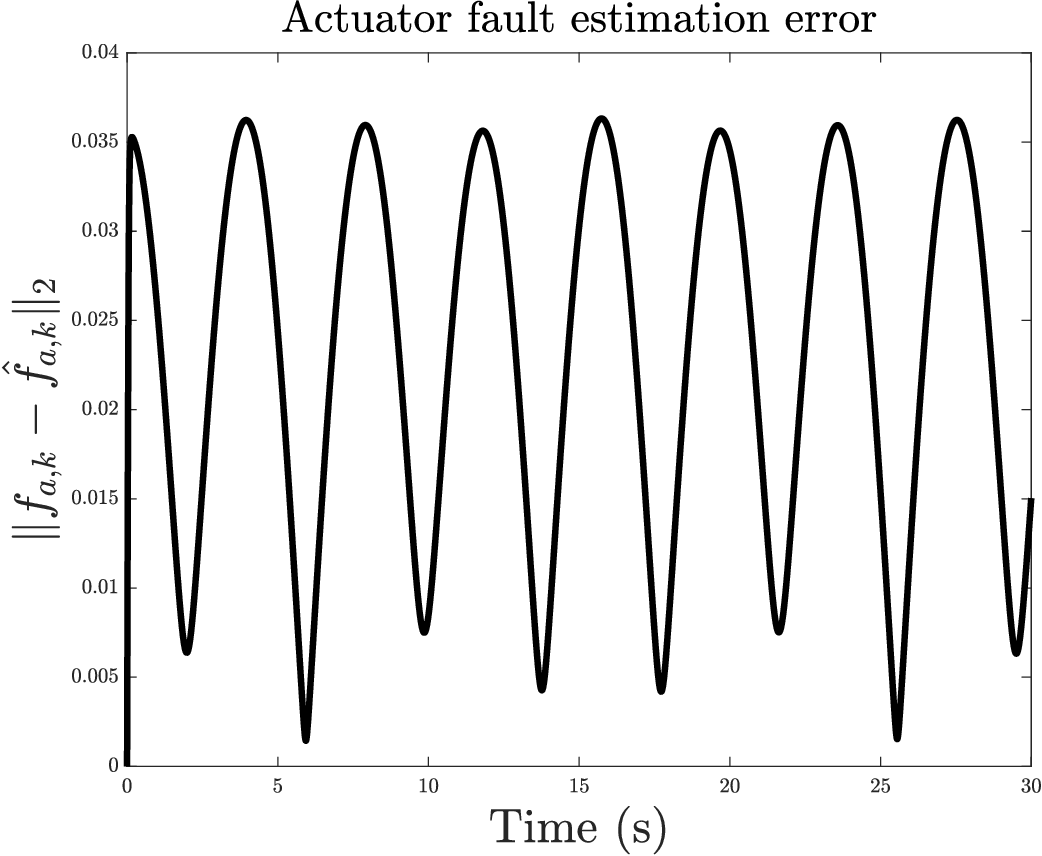}}
    \caption{Performance of the proposed proportional--integral observer for estimating slowly time-varying actuator faults under bounded disturbances.}
    \label{fig:time_varying_actuator_faults}
\end{figure*}

\section{Conclusions}

This paper has presented a fault-tolerant control framework for unmanned aerial vehicles by combining proportional--integral fault estimation with reinforcement learning-based control. A mathematical model of the quadcopter was first developed using both the Newton--Euler and Euler--Lagrange formulations. A proportional--integral observer was then proposed to estimate actuator and sensor faults. The observer analysis established exponential estimation-error convergence for constant faults in the absence of disturbances, and ultimate boundedness for slowly time-varying faults under bounded disturbances. The estimated faults were subsequently incorporated into an active fault-tolerant control architecture and integrated with a Deep Deterministic Policy Gradient controller to improve tracking performance under faulty operating conditions. Numerical simulations illustrated the estimation performance of the PI observer under constant and slowly time-varying faults, including bounded disturbances, and the tracking performance of the integrated fault-tolerant controller under the considered actuator and sensor fault scenarios.

Future work will extend the proposed framework to cooperative multi-agent UAV systems, where distributed fault estimation and fault-tolerant learning-based control can be developed for coordinated missions. Other promising directions include adaptive observer design for rapidly time-varying faults, communication-aware resilient control over networked systems, and reinforcement learning-based resilient control under communication uncertainties and adversarial disturbances \cite{F1,F2,Wafi-MRAC,F3,Wafi-ASCC26-FTC,F4,F5,F6}.

\bibliographystyle{ieeetr}
\bibliography{EN-Bibliography}

@ARTICLE{R1,
  author={Shakhatreh, Hazim and Sawalmeh, Ahmad H. and Al-Fuqaha, Ala and Dou, Zuochao and Almaita, Eyad and Khalil, Issa and Othman, Noor Shamsiah and Khreishah, Abdallah and Guizani, Mohsen},
  journal={IEEE Access}, 
  title={Unmanned Aerial Vehicles ({UAV}s): A Survey on Civil Applications and Key Research Challenges}, 
  year={2019},
  volume={7},
  number={},
  pages={48572-48634},
  doi={10.1109/ACCESS.2019.2909530}}

@ARTICLE{R2,
  author={Labib, Nader S. and Brust, Matthias R. and Danoy, Grégoire and Bouvry, Pascal},
  journal={IEEE Access}, 
  title={The Rise of Drones in Internet of Things: A Survey on the Evolution, Prospects and Challenges of Unmanned Aerial Vehicles}, 
  year={2021},
  volume={9},
  number={},
  pages={115466-115487},
  doi={10.1109/ACCESS.2021.3104963}}

@ARTICLE{R3,
  author={Chung, Wonmo and Son, Hungsun},
  journal={IEEE/ASME Transactions on Mechatronics}, 
  title={Fault-Tolerant Control of Multirotor {UAV}s by Control Variable Elimination}, 
  year={2020},
  volume={25},
  number={5},
  pages={2513-2522},
  doi={10.1109/TMECH.2020.2982436}}

@ARTICLE{R5,
  author={Zuo, Zongyu and Liu, Cunjia and Han, Qing-Long and Song, Jiawei},
  journal={IEEE/CAA Journal of Automatica Sinica}, 
  title={Unmanned Aerial Vehicles: Control Methods and Future Challenges}, 
  year={2022},
  volume={9},
  number={4},
  pages={601-614},
  doi={10.1109/JAS.2022.105410}}

@ARTICLE{R6,
  author={Huang, Hailong and Savkin, Andrey V.},
  journal={IEEE Transactions on Vehicular Technology}, 
  title={Navigating {UAV}s for Optimal Monitoring of Groups of Moving Pedestrians or Vehicles}, 
  year={2021},
  volume={70},
  number={4},
  pages={3891-3896},
  doi={10.1109/TVT.2021.3065102}}

@ARTICLE{R7,
  author={Bithas, Petros S. and Nikolaidis, Viktor and Kanatas, Athanasios G. and Karagiannidis, George K.},
  journal={IEEE Transactions on Communications}, 
  title={{UAV}-to-Ground Communications: Channel Modeling and {UAV} Selection}, 
  year={2020},
  volume={68},
  number={8},
  pages={5135-5144},
  doi={10.1109/TCOMM.2020.2992040}}

@ARTICLE{R8,
  author={Wang, Dan and Huang, Jianhao and He, Meng and Huang, Chuan},
  journal={IEEE Wireless Communications Letters}, 
  title={Spectrum Transaction Games for {UAV} Assisted Communications}, 
  year={2022},
  volume={11},
  number={6},
  pages={1216-1219},
  doi={10.1109/LWC.2022.3161387}}

@ARTICLE{R9,
  author={Raja, Gunasekaran and Anbalagan, Sudha and Ganapathisubramaniyan, Aishwarya and Selvakumar, Madhumitha Sri and Bashir, Ali Kashif and Mumtaz, Shahid},
  journal={IEEE Transactions on Vehicular Technology}, 
  title={Efficient and Secured Swarm Pattern Multi-{UAV} Communication}, 
  year={2021},
  volume={70},
  number={7},
  pages={7050-7058},
  doi={10.1109/TVT.2021.3082308}}

@article{R23,
	author = {Katherin Indriawati and Trihastuti Agustinah and Achmad Jazidie},
	title = {Robust Observer-Based Fault Tolerant Tracking Control for Linear Systems with Simultaneous Actuator and Sensor Faults: Application to a DC Motor System},
	journal = {International Review on Modelling and Simulations (IREMOS)},
	volume = {8},
	number = {4},
	year = {2015},
	issn = {2533-1701}
}

@article{R29,
  title={Fault-Tolerant Control Design in Scrubber Plant with Fault on Sensor Sensitivity},
  author={Wafi, Moh Kamalul and Indriawati, Katherin},
  journal={The Journal of Scientific and Engineering Research},
  year={2022},
  volume={9},
  number={2},
  pages={96-104}
}

@book{R33,
  title     = "Flight stability and automatic control 2nd ed.",
  author    = "Nelson, Robert C.",
  year      = 1998,
  publisher = "WCB/McGraw Hill",
  address   = "Boston, Mass."
}

@book{R34,
  title     = "Introduction to Aircraft Flight Mechanics: Performance, Static Stability, Dynamic Stability, and Classical Feedback Control, Second Edition",
  author    = "Yechout, Thomas R. and Morris, Steven L. and Bossert, David E. and Hallgren, Wayne F. and Hall, James K.",
  year      = 2014,
  publisher = "American Institute of Aeronautics and Astronautics (AIAA), Inc",
  address   = "1801 Alexander Bell Drive, Reston, VA 20191-4344"
}

@book{R42,
  title     = "State Estimation and Control for Low-cost Unmanned Aerial Vehicles",
  author    = "Hajiyev, Chingiz and Soken, Halil Ersin and Vural, Sıtkı Yenal",
  year      = 2015,
  publisher = "Springer Chamc",
  address   = "1801 Alexander Bell Drive, Reston, VA 20191-4344",
  doi={10.1007/978-3-319-16417-5}
}

@article{R44,
author = {Niemann, H. H. and Stoustrup, J. and Shafai, B. and Beale, S.},
title = {LTR design of proportional-integral observers},
journal = {International Journal of Robust and Nonlinear Control},
volume = {5},
number = {7},
pages = {671-693},
doi = {https://doi.org/10.1002/rnc.4590050706},
url = {https://onlinelibrary.wiley.com/doi/abs/10.1002/rnc.4590050706},
eprint = {https://onlinelibrary.wiley.com/doi/pdf/10.1002/rnc.4590050706},
year = {1995}
}

@INPROCEEDINGS{R45,
  author={Shafai, B. and Pi, C.T. and Nork, S.},
  booktitle={Proceedings of the 5th Biannual World Automation Congress}, 
  title={Robust fault detection using proportional integral observers}, 
  year={2002},
  volume={14},
  number={},
  pages={515-520},
  doi={10.1109/WAC.2002.1049489}}

@INPROCEEDINGS{R46,
  author={Shafai, Bahram and Moradmand, Anahita},
  booktitle={2020 IEEE Conference on Control Technology and Applications (CCTA)}, 
  title={Design of an Integrated Observer Structure for Robust Fault Detection}, 
  year={2020},
  volume={},
  number={},
  pages={248-253},
  doi={10.1109/CCTA41146.2020.9206387}}

@INPROCEEDINGS{R47,
  author={Vankadari, Madhu Babu and Das, Kaushik and Shinde, Chinmay and Kumar, Swagat},
  booktitle={2018 International Conference on Unmanned Aircraft Systems (ICUAS)}, 
  title={A Reinforcement Learning Approach for Autonomous Control and Landing of a Quadrotor}, 
  year={2018},
  volume={},
  number={},
  pages={676-683},
  doi={10.1109/ICUAS.2018.8453468}}

@ARTICLE{R48,
  author={Oubbati, Omar Sami and Lakas, Abderrahmane and Guizani, Mohsen},
  journal={IEEE Internet of Things Journal}, 
  title={Multiagent Deep Reinforcement Learning for Wireless-Powered UAV Networks}, 
  year={2022},
  volume={9},
  number={17},
  pages={16044-16059},
  doi={10.1109/JIOT.2022.3150616}}

@ARTICLE{R49,
  author={Zhang, Dapeng and Gao, Zhiwei},
  journal={IEEE Access}, 
  title={Fault Tolerant Control Using Reinforcement Learning and Particle Swarm Optimization}, 
  year={2020},
  volume={8},
  number={},
  pages={168802-168811},
  doi={10.1109/ACCESS.2020.3022893}}

@ARTICLE{R50,
  author={Hua, Changsheng and Li, Linlin and Ding, Steven X.},
  journal={IEEE Transactions on Automatic Control}, 
  title={Reinforcement Learning-Aided Performance-Driven Fault-Tolerant Control of Feedback Control Systems}, 
  year={2022},
  volume={67},
  number={6},
  pages={3013-3020},
  doi={10.1109/TAC.2021.3088397}}

@ARTICLE{R51,
  author={Deng, Huifan and Zhao, Youqun and Nguyen, Anh-Tu and Huang, Chao},
  journal={IEEE/ASME Transactions on Mechatronics}, 
  title={Fault-Tolerant Predictive Control With Deep-Reinforcement-Learning-Based Torque Distribution for Four In-Wheel Motor Drive Electric Vehicles}, 
  year={2023},
  volume={28},
  number={2},
  pages={668-680},
  doi={10.1109/TMECH.2022.3233705}}

@ARTICLE{F1,
  author={Liang, Dingguo and He, Zhichen and Ding, Shuyu and Yang, Ying},
  journal={IEEE Transactions on Circuits and Systems I: Regular Papers}, 
  title={Distributed Fault Estimation and Fault-Tolerant Control of Interconnected Systems With Plug-and-Play Features}, 
  year={2024},
  volume={71},
  number={1},
  pages={431-442},
  doi={10.1109/TCSI.2023.3330839}}

@INPROCEEDINGS{F2,
  author={Bhan, Luke and Quinones-Grueiro, Marcos and Biswas, Gautam},
  booktitle={2021 5th International Conference on Control and Fault-Tolerant Systems (SysTol)}, 
  title={Fault Tolerant Control combining Reinforcement Learning and Model-based Control}, 
  year={2021},
  volume={},
  number={},
  pages={31-36},
  doi={10.1109/SysTol52990.2021.9595275}}

@INPROCEEDINGS{F3,
  author={Taslima, Eram and Kamal, Shyam and Saket, R. K. and Dinh, Thach Ngoc},
  booktitle={2025 American Control Conference (ACC)}, 
  title={An Adaptive Observer Design for State and Parameter Estimation of Quantum Systems via Averaging Theory}, 
  year={2025},
  volume={},
  number={},
  pages={4885-4890},
  doi={10.23919/ACC63710.2025.11107918}}

@ARTICLE{F4,
  author={Nguyen, Truong-Son and Cao, Huy-Nhat and Thai, Nguyen Huu and Pham, Minh Trien},
  journal={IEEE Transactions on Automation Science and Engineering}, 
  title={C-CBF: Communication-Aware Control Barrier Functions for Resilient Multi-Robot Connectivity}, 
  year={2026},
  volume={23},
  number={},
  pages={5596-5614},
  doi={10.1109/TASE.2026.3667964}}

@ARTICLE{F5,
  author={Sahu, Abhijeet and Venkatraman, Venkatesh and Macwan, Richard},
  journal={IEEE Access}, 
  title={Reinforcement Learning Environment for Cyber-Resilient Power Distribution System}, 
  year={2023},
  volume={11},
  number={},
  pages={127216-127228},
  doi={10.1109/ACCESS.2023.3282182}}

@INPROCEEDINGS{F6,
  author={Gao, Bing and Bu, Bing},
  booktitle={2021 IEEE International Intelligent Transportation Systems Conference (ITSC)}, 
  title={Deep Reinforcement Learning-based Resilient Control Method for CBTC Systems through Train-to-Train Communications under Adversarial Attacks}, 
  year={2021},
  volume={},
  number={},
  pages={3679-3684},
  doi={10.1109/ITSC48978.2021.9564835}}

@INPROCEEDINGS{Wafi-ASCC26-DNC,
  author={Wafi, Moh Kamalul and Ataka, Ahmad and Nazaruddin, Yul Y. and Jayawardhana, Bayu},
  booktitle={2026 15th Asian Control Conference (ASCC)}, 
  title={Distributed Nonlinear Control of Networked Two-Wheeled Robots Under Adversarial Interactions}, 
  year={2026},
  volume={},
  number={},
  pages={1524-1529},
  doi={10.1109/ASCC70886.2026.11701762}
}

@INPROCEEDINGS{Wafi-ASCC26-FTC,
  author={Wafi, Moh Kamalul and Nugraha, Yurid E. and Widjiantoro, Bambang L. and Indriawati, Katherin},
  booktitle={2026 15th Asian Control Conference (ASCC)}, 
  title={Cooperative $\mathcal{H}_{\infty}$ Fault-Tolerant Tracking with ISS Guarantees for Networked Systems with Sensor Faults}, 
  year={2026},
  volume={},
  number={},
  pages={1468-1473},
  doi={10.1109/ASCC70886.2026.11702178}
}

@article{Wafi-MRAC,
   title={Model reference adaptive control of networked systems with state and input delays},
   volume={14},
   ISSN={2088-8708},
   DOI={10.11591/ijece.v14i5.pp5055-5063},
   number={5},
   journal={International Journal of Electrical and Computer Engineering (IJECE)},
   publisher={Institute of Advanced Engineering and Science},
   author={Wafi, Moh Kamalul and Indriawati, Katherin and Widjiantoro, Bambang L.},
   year={2024},
   month=oct, pages={5055} 
}

@INPROCEEDINGS{Wafi-CoDIT,
  author={Wafi, Moh Kamalul and Hajian, Rozhin and Shafai, Bahram and Siami, Milad},
  booktitle={2023 9th International Conference on Control, Decision and Information Technologies (CoDIT)}, 
  title={Advancing Fault-Tolerant Learning-Oriented Control for Unmanned Aerial Systems}, 
  year={2023},
  volume={},
  number={},
  pages={1688-1693},
  doi={10.1109/CoDIT58514.2023.10284443}
}

@article{Wafi-SIAM-arXiv,
    title={On the Comparison of Reinforcement Learning and Adaptive Control for Linear Systems under Packet Loss and Uncertainty}, 
    author={Moh Kamalul Wafi},
    journal = {arXiv preprint arXiv:2606.32003},
    year = {2026},
}

@INPROCEEDINGS{Wafi-SIAM,
author = {Moh. Kamalul Wafi and Milad Siami},
title = {A Comparative Analysis of Reinforcement Learning and Adaptive Control Techniques for Linear Uncertain Systems},
booktitle = {2023 Proceedings of the Conference on Control and its Applications (CT)},
chapter = {},
year={2023},
pages = {25-32},
doi = {10.1137/1.9781611977745.4},
eprint = {https://epubs.siam.org/doi/pdf/10.1137/1.9781611977745.4}
}

@article{WAFI-JRC, 
    title={Non-Linear Estimation using the Weighted Average Consensus-Based Unscented Filtering for Various Vehicles Dynamics towards Autonomous Sensorless Design}, 
    volume={4}, 
    number={1}, 
    journal={Journal of Robotics and Control (JRC)}, 
    author={Widjiantoro, Bambang L. and Wafi, Moh Kamalul and Indriawati, Katherin}, 
    year={2023}, 
    month={Mar.}, 
    pages={95–107},
    url={https://journal.umy.ac.id/index.php/jrc/article/view/16164}, 
    DOI={10.18196/jrc.v4i1.16164},
}

@article{Wafi-Hydraulic-arXiv,
    title={Estimation and Fault Detection on Hydraulic System with Adaptive-Scaling Kalman and Consensus Filtering}, 
    author={Moh Kamalul Wafi},
    journal = {arXiv preprint arXiv:2304.04122},
    year = {2023},
}

@article{Wafi-AIP-arXiv,
    title={Neural-Enhanced Micro-Kalman Filtering for Satellite Tracking: A Comparative Study}, 
    author={Moh Kamalul Wafi},
    journal = {arXiv preprint arXiv:2607.06830},
    year = {2026},
}

@ARTICLE{Wafi-AIP,
    author = {Wafi, Moh Kamalul},
    title = {Filtering module on satellite tracking},
    journal = {AIP Conference Proceedings},
    volume = {2088},
    number = {1},
    pages = {020045},
    year = {2019},
    doi = {10.1063/1.5095297},
    URL = {https://aip.scitation.org/doi/abs/10.1063/1.5095297}
}

\end{document}